\newtheorem{theorem}{Theorem}
\newtheorem{lemma}[theorem]{Lemma}
\newtheorem{claim}[theorem]{Claim}
\newtheorem{definition}{Definition}
\theoremstyle{nonumberplain}
\newtheorem{proof}{Proof.}
\theoremstyle{empty}
\newcommand{\beq}{\begin{eqnarray}}
\newcommand{\eeq}{\end{eqnarray}}
\newcommand{\beqn}{\begin{equation}}
\newcommand{\eeqn}{\end{equation}}
\newcommand{\R}{\mathbb{R}}
\newcommand{\N}{\mathbb{N}}
\newcommand{\Rp}{\mathbb{R}_+}
\newcommand{\ep}{\varepsilon}
\newcommand{\beps}{\varepsilon}
\newcommand{\lf}{\left}
\newcommand{\rf}{\right}
\newcommand{\bd}{\mathbf{d}}
\newcommand{\bone}{\mathbf{1}}
\newcommand{\mc}{\mathcal}
\newcommand{\mb}{\mathbf}
\renewcommand{\hat}{\widehat}
\newcommand{\cM}{\mc{M}}
\newcommand{\cX}{\mc{X}}
\newcommand{\cH}{\mc{H}}
\renewcommand{\P}{\mc{P}}
\newcommand{\bB}{\mathbf{B}}
\newcommand{\bx}{\mb{x}}
\newcommand{\xave}{x_{\text{ave}}}
\begin{document}
\begin{frontmatter}
\title{Distributed averaging via lifted Markov chains}
\runtitle{Fast averaging algorithms}

\begin{aug}
  \author{\fnms{Kyomin} \snm{Jung} \qquad  \fnms{Devavrat} \snm{Shah} \qquad \fnms{Jinwoo} \snm{Shin}\thanksref{shin}\ead[label=e1]{jinwoos@mit.edu}}
  \runauthor{Jung-Shah-Shin}
  \affiliation{MIT}
\thankstext{shin}{Author names appear in the alphabetical order of their
last names. All authors are with Laboratory for Information and
Decision Systems, MIT. This work was supported in parts by NSF projects
HSD 0729361, CNS 0546590, TF 0728554 and DARPA ITMANET project.
Authors' email addresses: {\tt \{kmjung, devavrat, jinwoos\}@mit.edu}}
\end{aug}

\begin{abstract}
Motivated by applications of distributed linear estimation, distributed control
and distributed optimization, we consider the question of designing linear iterative
algorithms for computing the average of numbers in a network. Specifically, our interest
is in designing such an algorithm with the fastest rate of convergence given
the topological constraints of the network. As the main result of this paper,
we design an algorithm with the fastest possible rate of convergence using a
non-reversible Markov chain on the given network graph.  We construct such a
Markov chain by transforming the standard Markov chain, which is obtained using the
Metropolis-Hastings method. We call this novel transformation {\em pseudo-lifting}.
We apply our method to graphs with geometry, or graphs with doubling dimension.
Specifically, the convergence time of our algorithm (equivalently, the mixing time
of our Markov chain) is proportional to the diameter of the network graph and hence
optimal. As a byproduct, our result provides the fastest mixing Markov chain given
the network topological constraints, and should naturally find their
applications in the context of distributed optimization, estimation and control.
\end{abstract}

\begin{keyword}
\kwd{consensus}
\kwd{lifting}
\kwd{linear averaging}
\kwd{Markov chain}
\kwd{non-reversible}
\kwd{pseudo-lifting}
\kwd{random walk}
\end{keyword}

\end{frontmatter}

\section{Introduction}

The recently emerging network paradigms such as sensor networks,
peer-to-peer networks and surveillance networks of unmanned vehicles have led
to the requirement of designing distributed, iterative and efficient algorithms
for estimation, detection, optimization and control.
Such algorithms provide scalability and robustness necessary for
the operation of such highly distributed and dynamic networks.
In this paper, motivated by applications of linear estimation in sensor
networks \cite{KDG03,BGPS06,MFHH02,T84}, information exchange in peer-to-peer
networks \cite{MCSY03, MS08} and reaching consensus in unmanned vehicles \cite{JLS03},
we consider the problem of computing the average of numbers in
a given network in a distributed manner. Specifically, we
consider the class of algorithms for computing the average using distributed
linear iterations. In applications of interest, the rate of convergence of
the algorithm strongly affects its performance. For example, the rate of
convergence of the algorithm determines the agility of a distributed estimator
to track the desired value \cite{BGPS06} or the error in the distributed optimization
algorithm \cite{NO08}. For these reasons, designing algorithms with fast rate of
convergence is of a great recent interest \cite{BGPS06,TN06,W07}
and the question that we consider in this paper.

A network of $n$ nodes whose communication graph is denoted by $G = (V,E)$, where
$V = \{1,\dots, n\}$ and $E = \{(i,j) : i \mbox{~and~} j \mbox{~can communicate}\}$.
Each node has a distinct value and our interest is designing a distributed iterative
algorithm for computing the average of these numbers at the nodes. A popular approach,
started by Tsitsiklis \cite{T84},
involves finding a non-negative valued $n\times n$ matrix $P = [P_{ij}]$ such that
\begin{itemize}
\item[(a)] $P$ is graph conformant, i.e. if $(i,j) \notin E$ then $P_{ij} = 0$,
\item[(b)] $\bone^T P = \bone^T$, where $\bone = [1]$ is the (column) vector of all components $1$,
\item[(c)] $P^t \bx \to \xave\bone$
as $t\to\infty$ for any $\bx \in \R_+^n$, where $\xave = \left(\sum_{i=1}^n x_i\right)/n$.
\end{itemize}
This is equivalent to finding an irreducible, aperiodic random walk on graph
$G$ with the uniform stationary distribution.

The quantity of interest, or the performance of algorithm, is the time it takes for the algorithm
to get close to $\xave \bone$ starting from any $\bx$. Specifically, given $P$, define
the $\ep$-computation time of the algorithm as
\beq
T_\ep(P)& = & \inf\left\{t~:~\forall \bx \in \R_+^n,
\frac{\|P^t\bx - x_{ave} \bone\|_\infty}{x_{ave}} \le \ep \right\}.
\eeq
It is well-known that $T_\ep(P)$ is proportional\footnote{Lemma \ref{lemma:running_mixing} states the
precise relation. Known terms, such as mixing time, that are used here
are defined in Section \ref{subsec:two-one}.} to the mixing time, denoted as $\cH(P)$, of the random walk
with transition matrix $P$. Thus, the question of interest in this paper is to
find a graph conformant $P$ with the smallest computation time or equivalently
a random walk with the smallest mixing time. Indeed, the question of designing a random
walk on a given graph with the smallest mixing time in complete generality is a well
known unresolved question.

The standard approach of finding such a $P$ is based on the method of Metropolis \cite{MRRTT53}
and Hastings \cite{H70}.
This results in a {\em reversible}
random walk $P$ on $G$. The mixing time $\cH(P)$
is known to be bounded as
$$ \frac{1}{\Phi(P)} \leq \cH(P) \leq O\left(\frac{\log n}{\Phi^2(P)}\right),$$
where $\Phi(P)$
denotes the conductance of $P$. Now, for {\em expander}
graphs the resulting $P$ induced by the Metropolis-Hastings method
is likely to have $\Phi(P) = \Theta(1)$ and hence
the mixing time is $O(\log n)$ which is {\em essentially} the fastest possible. For
example, a random walk $P=[1/n]$ on the complete graph has $\Phi(P) = 1/2$ with mixing time
$O(1)$. Thus, the question of interest is reasonably resolved for graphs
that are expanding.

Now the graph topologies arising in practice, such as those in wireless
sensor network deployed in some geographic area \cite{BGPS06, W07} or a nearest neighbor
network of unmanned vehicle \cite{ks-fb-ef:05j},
do possess {\em geometry} and are far from being expanders. A
simple example of graph with geometry is the {\em ring} graph of
$n$ nodes as shown in Figure \ref{fig0}. The Metropolis-Hastings method will lead to $P_1$ shown
in Figure \ref{fig0}(a). Its mixing time is $O(n^2 \log n)$ and
no smaller than $\Omega(n^2)$ (e.g. see \cite{BDX04}). More generally, the mixing time of any reversible random
walk on the ring graph is lower bounded by $\Omega(n^2)$ \cite{BDSX06}
for its mixing time.
Note that the diameter of the ring graph is $n$ and
obviously no random walk can mix faster than the diameter. Hence, apriori it
is not clear if the fastest mixing time is $n^2$ or $n$ or
something in between: that is, does the smallest mixing time of the random
walk on a typical graph $G$ scale like the diameter of $G$, the square of the diameter
or a power of the diameter in $(1,2)$?

\begin{figure}[htb]
\begin{center}
\centerline{\psfig{figure=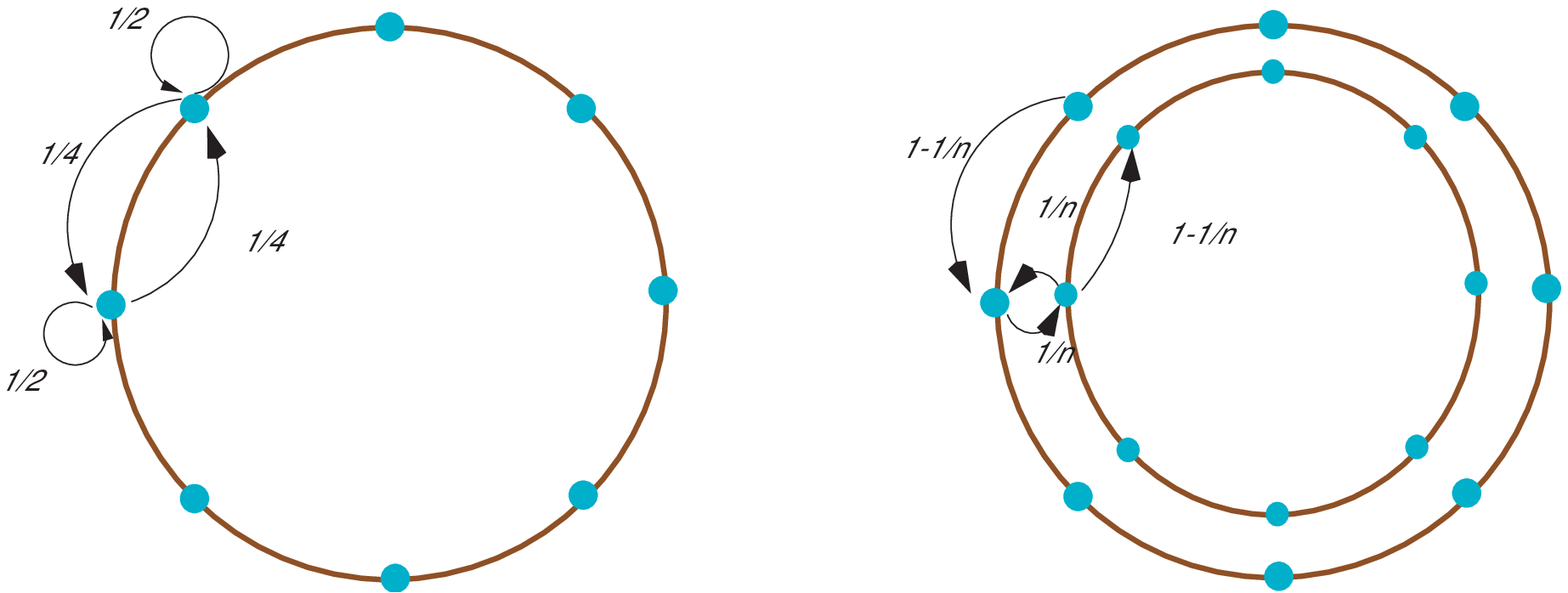,width=12cm,angle=0}}
\caption{$~~~~$(a) : $P_1$ on the ring graph $G_1$. $~~~~~~~~~~~~~~~~$ (b) : $P_2$ on the lifted ring graph $G_2$.  }
\label{fig0}
\end{center}
\end{figure}

In general, in most cases of interest the mixing time of the reversible
walk $P$ scales like $1/\Phi^2(P)$. The conductance $\Phi(P)$ relates
to diameter $D$ of a graph $G$ as $1/\Phi(P) \geq D$. Therefore,
in such situations the mixing time of random walk based on the
Metropolis-Hastings method is likely to scale like $D^2$, the square
of the diameter. Indeed, Diaconis and Saloff-Coste \cite{DS06}
established that for a certain class of graphs with {\em geometry}
the mixing time of any reversible random walk scales like at
least $D^2$ and it is achieved by the Metropolis-Hastings'
approach. Thus, reversible random walks result in rather poor performance
for graphs with geometry i.e. their mixing time is far from our best hope,
the diameter $D$.

Motivated by this, we wish to undertake the following reasonably
ambitious question in this paper: {\em is it possible to design
a random walk with mixing time of the order of diameter $D$ for
any graph?} We will answer this question in affirmative by producing
a novel construction of non-reversible random walks on the {\em lifted}
version of graph $G$. And thus, we will design iterative averaging
algorithms with the fastest possible rate of convergence.

\vspace{-.1in}
\subsection{Related work}
\vspace{-.1in}
In an earlier work, Diaconis, Holmes and Neal \cite{DHN97}
introduced a construction of a non-reversible random walk
on the ring (and more generally ring-like) graph. This random walk runs
on the {\em lifted} ring graph, which is described as $G_2$ in Figure \ref{fig0}(b).
Here, by lifting we mean
making additional copies of the nodes of the original graph and adding
edges between some of these copies while preserving the original
graph topology. Figure \ref{fig0}(b) explains the construction in \cite{DHN97}
for the ring graph. Note that each node has two copies and the lifted
graph is essentially composed of two rings: an inner ring and an outer ring.
The transition on the inner circle forms a clockwise circulation and
the transition on the outer circle forms a counterclockwise circulation.
And the probability of changing from the inner circle to the outer circle
and vice versa are $1/n$ each time. By defining transitions in this way,
the stationary distribution is also preserved; i.e.
the sum of stationary distributions of copies is equal to the stationary distribution of
their original node.
Somewhat surprisingly, the authors \cite{DHN97}
proved that this non-reversible random walk has the linear mixing time $O^*(n)$.\footnote{For a function $f:\mathbb{N}\rightarrow \mathbb{R}^+$,
$O^*(f(n)):=O(f(n)\mbox{poly} (\log n))$.} Thus, effectively (i.e. up to
$\log n$ factor) the mixing time is of the order of the diameter $n$. It should be
noted that because lifting preserves the graph topology and the stationary distribution,
it is possible
to simulate this lifted random walk on the original graph by expanding
the {\em state} appropriately, with the desired output. Equivalently,
it is possible to use a lifted random walk for linear averaging by running
iterations with extra states.\footnote{The details are given in Section \ref{sec:five}.}

The following question arose from the work of \cite{DHN97}: given graph $G$
and random walk $P$ on $G$, is it possible to design a non-reversible random walk
on the lifted version of $G$ which mixes subsequently faster than $P$?
Can it mix in $O(D)$? This question was addressed in a subsequent work by Chen, Lov\'{a}sz and
Pak \cite{CLP02}. They provided an explicit construction of a random
walk on a lifted version of $G$ with mixing time $O^*(1/\Phi(P))$. Further,
they showed that, under the notion of lifting (implicity) introduced
by \cite{DHN97} and formalized in \cite{CLP02}, it is not possible to design
such a lifted random walk with mixing time smaller than $\Omega(1/\Phi(P))$.

Now note that $1/\Phi(P)$ can be much larger than the diameter $D$. As a simple
example, consider a ring graph with $P$ exactly the same as that in
Figure \ref{fig0}(a), but with a difference that for two edges the
transition probabilities are $\delta(n)$ instead of $1/4$ (and the
transition probabilities of endpoints of these edges appropriately adjusted).
Then, it can be checked that $1/\Phi(P)$ is $\Omega(n/\delta(n))$
which can be arbitrarily poor compared to the diameter $n$ by choosing
small enough $\delta(n)$. A more interesting example showing this poorer
scaling of $1/\Phi(P)$ compared to diameter, even for the Metropolis-Hastings'
style construction, is presented in Section \ref{sec:three} in the
context of a ``Barbell graph'' (see Figure \ref{fig3}). Thus, the lifting approach of \cite{DHN97, CLP02}
can not lead to a random walk with mixing time of the order of diameter and
hence the question of existence or design of such a random walk remains
unresolved.

As noted earlier, the lifted random walk can be used to design iterative
algorithms (for computing an average) on the original graph since the topology
of the lifted graph and the stationary distribution of the lifted random walk
``projects back'' onto those of the original graph and the random walk respectively.
However, running algorithm based on lifted random walks on the original graph
requires additional states. Specifically, the lifted random walk based algorithm
can be simulated on the original graph by running multiple {\em threads} on each
node. Specifically, the number of operations performed per iteration across
the network depends on the size\footnote{In this paper, the size of a random walk (resp. graph)
is the number of non-zero entries in its transition matrix (resp. number of edges in the graph).}
of the lifted walk (or graph).  In the construction of \cite{CLP02} for a general graph,
this issue about the size of the lifted walk was totally ignored as the authors' interest was
only the time complexity, not the size. Therefore, even though time may
reduce under the construction of \cite{CLP02} the overall cost (equal to the product
of time and size) may not be reduced; or even worse, it may increase.

Therefore, from the perspective of the application of iterative algorithms we
need a notion of lifting that leads to a design of a random walk that has (a)
mixing time of the order of diameter of the original graph and (b) the smallest
possible size.


\subsection{Our contributions}

In this paper, we answer the above stated question affirmatively. As noted
earlier, the notion of lifting of \cite{DHN97, CLP02} can not help in
answering this question. For this reason, we introduce a notion of
{\em pseudo-lifting} which can be thought of as a relaxation of the notion
of lifting. Like lifting, the notion of pseudo-lifting preserves
the topological constraints of the original graph. But the relaxation
comes in preserving the stationary distribution in an approximate
manner. However, it should be noted that is still possible to use the pseudo-lifted random
walk to perform the iterative algorithm without any approximation
errors (or to sample objects from a stationary distribution
without any additional errors) since the stationary distribution
of pseudo-lifting under a restricted projection provides the original stationary
distribution exactly. Thus, operationally our notion of pseudo-lifting is as effective
as lifting.

First, we use pseudo-lifting to design a random walk with mixing time of the
order of diameter of a given graph with the desired stationary distribution. To
achieve this, we first use the Metropolis-Hastings method to construct a random
walk $P$ on the given graph $G$ with the desired stationary distribution.
Then, we {\em pseudo-lift} this $P$ to obtain a random walk with mixing time of the
order of diameter of $G$. This approach is stated as Theorem \ref{thm:mixing_pseudo}. 

As discussed earlier, the utility of such constructions lies in the context
of graphs with {\em geometry}. The graphs with (fixed) finite {\em doubling dimension},
introduced in \cite{A83,H01,GKL03, dasgupta-khandekar-stoc08},
serve as an excellent model for such a class of graphs. Roughly speaking,
a graph has doubling dimension $\rho$ if the number of nodes within the
shortest path distance $r$ of any node of $G$ is $O(r^\rho)$
(i.e. polynomial growth of the neighborhood of a node). We apply our construction
of pseudo-lifting to graphs with finite doubling dimension $\rho$ to
obtain a random walk with mixing time of the order of diameter $D$. In
order to address the concern with expansion in the size of the
pseudo-lifted graph, we use the geometry of the original graph explicitly. Specifically,
we reduce the size of the lifted graph by a clever combination of
clustering, geometry and pseudo-lifting. This formal result is stated
as follows and its proof is in Section \ref{subsec:four-two}.

\begin{theorem}\label{thm:performance_pseudo_lifting}
Consider a connected graph $G$ with doubling dimension $\rho$ and diameter
$D$. It is possible to explicitly construct a pseudo-lifted random walk on $G$
with mixing time $O(D)$ chain and size $O\left(Dn^{1-\frac{1}{1+\rho}}\right)$.
\end{theorem}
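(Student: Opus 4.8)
The plan is to build the pseudo-lifted walk in two levels, using the doubling geometry to coarsen $G$ so that the expensive lifting is carried out only on a small backbone. First I would run the Metropolis--Hastings method to obtain a reversible walk $P$ on $G$ with the uniform stationary distribution, exactly as in the hypothesis of Theorem \ref{thm:mixing_pseudo}. The base construction of Theorem \ref{thm:mixing_pseudo}, applied to $P$ directly, already yields a pseudo-lift with mixing time $O(D)$; the only thing left to control is its size, and this is where the geometry enters.

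Next I would partition $V$ into clusters $C_1,\dots,C_k$, each of radius $\Theta(r)$ for a scale parameter $r$ to be fixed, using a greedy $r$-net / ball-covering. The doubling-dimension hypothesis guarantees that every such cluster contains $O(r^\rho)$ vertices, so that $k = \Theta(n/r^\rho)$, and that the induced cluster graph $\hat G$ (super-nodes $=$ clusters, edges between adjacent clusters) has bounded degree and diameter $O(D/r)$. I would then apply Theorem \ref{thm:mixing_pseudo} to $\hat G$ to obtain a fast macroscopic walk that mixes over the clusters, and realize it on $G$ by routing each macroscopic transition through the $O(r)$ real steps needed to cross a cluster, together with a light-weight local walk inside each cluster that equilibrates to the (renormalized) uniform distribution on its vertices.

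The arithmetic then drives the choice of $r$. Choosing the cluster size to be $\Theta(n^{1/(1+\rho)})$, equivalently $r = \Theta(n^{1/(\rho(1+\rho))})$, makes the number of clusters $k = \Theta(n^{1 - 1/(1+\rho)})$; since the lifted backbone maintains only $O(D)$ phases per cluster and has bounded macroscopic degree, its size is $O(Dk) = O(D n^{1 - 1/(1+\rho)})$, which is exactly the target. The remaining work is to show (i) that the composed walk still mixes in $O(D)$ real steps --- each of the $O(D/r)$ macroscopic super-steps costs $O(r)$ real steps and the intra-cluster equilibration runs concurrently --- and (ii) that the intra-cluster contribution to the size is dominated by the backbone, so that the overall size bound is preserved.

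The main obstacle, I expect, is the interface between the two levels rather than either level in isolation. Concretely, I must verify that the stationary distribution of the composite pseudo-lifted walk, under the restricted projection that defines pseudo-lifting, collapses exactly to the uniform distribution on $G$; this requires weighting each cluster by its size and checking that the macroscopic stationary mass and the intra-cluster equilibrium multiply correctly. Equally delicate is arguing that the two mixing processes compose to $O(D)$ rather than to a product of the two time scales, which is essentially a decomposition / Markov-chain-comparison argument, and it is where the balance $s \sim r^\rho$ forced by the doubling dimension must be used carefully, since a mismatched scale $r$ would blow up either the backbone size (too many clusters) or the intra-cluster cost (clusters too large to equilibrate within the $O(r)$-step budget).
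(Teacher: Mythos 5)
Your skeleton coincides with the paper's: Metropolis--Hastings base chain, an $R$-net whose points serve as sub-roots, a two-level hierarchy, and the trade-off between the $n$ short node-to-sub-root connections and the $|Y|$ long sub-root-to-root connections is exactly the balance the paper strikes in Lemma \ref{lemma:pseudo_size}. But the realization you propose for the second level has a genuine gap. The paper does \emph{not} pseudo-lift a quotient graph and then compose it with ``a light-weight local walk inside each cluster that equilibrates'': it attaches, for every node $w$, explicit directed paths $\P^{\prime}_{wy},\P^{\prime}_{yw}$ of length exactly $R$ to the sub-root copies $y_1^\prime,y_2^\prime$, and paths $\P^{\prime}_{yv},\P^{\prime}_{vy}$ of length exactly $D$ to the root copy $v^\prime$, and assigns ergodic flows $\frac{\delta_2}{2(R+D)}\pi_w$ and $\frac{\delta_2}{2(R+D)}\pi(C_y)$ on the two tiers. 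Flow conservation at the sub-roots then holds identically (since $\sum_{w\in C_y}\pi_w=\pi(C_y)$), the stationary distribution is computed in closed form, and property (b) of Definition \ref{def:pseudo-lifting} holds \emph{exactly} by the choice of $\delta_2$ in (\ref{eq:delta2}). Your interface cannot deliver this: a local walk that ``equilibrates to the renormalized uniform distribution'' only does so approximately, whereas pseudo-lifting demands $\hat{\pi}(f^{-1}(u)\cap T)=\frac12\pi(u)$ as an identity; and a reversible intra-cluster walk needs $\Omega(r^2)$ steps to mix (and possibly more to be \emph{hit} from an arbitrary node of the cluster), so both the exactness of the stationary distribution and the $O(D)$ bound hinge on the two-scale decomposition argument that you correctly flag as ``the main obstacle'' but do not supply. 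The paper sidesteps that obstacle entirely: Lemma \ref{lemma:pseudo_mixing} is a single stopping rule (walk until $v^\prime$, draw a six-valued coin, stop on the appropriate path segment), whose expected length is $O((R+D)/\delta_2)=O(D)$ because every old node boards a path with probability $\Theta(\delta_2/(R+D))$ per step.

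Two smaller points. First, your scale $r=\Theta\left(n^{1/(\rho(1+\rho))}\right)$ is correct only when $D=\Theta\left(n^{1/\rho}\right)$; the right general choice is the paper's $R=\Theta\left(Dn^{-1/(\rho+1)}\right)$, which makes $2Rn$ and $2D|Y|$ both $O\left(Dn^{1-\frac{1}{1+\rho}}\right)$ for every admissible $D$. Second, the bound on the number of clusters must come from the packing property of the net, $|Y|\leq(2D/R)^\rho$; your inference ``each cluster has $O(r^\rho)$ vertices, hence $k=\Theta(n/r^\rho)$'' runs the wrong way, since an upper bound on cluster cardinality yields only a \emph{lower} bound on $k$.
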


As a specific example, consider a $d$-dimensional grid whose doubling dimension is $d$.
The Metropolis-Hastings
method has mixing time $\Omega\left(n^{2/d}\right)$, compared to our construction
with mixing time $O\left(n^{1/d}\right)$. Further, our construction leads to an increase
in size of the random walk only by $O\left(n^{1/d(d+1)}\right)$ factor.
That is, pseudo-lifting is optimal in terms of the number of iterations, which is equal
to diameter, and in terms of cost per iteration it is lossy by a relatively small
amount, for example $O\left(n^{1/d(d+1)}\right)$ for the $d$-dimensional grid.

In general, we can use pseudo-lifting to design iterative algorithms
for computing the average of given numbers on the original graph itself. We
describe a precise {\em implementation} of such an algorithm in Section
\ref{sec:five}. The use of pseudo-lifting, primarily effective for a class of
graphs with geometry, results in the following formal result whose proof
is in Section \ref{subsec:five-two}.

\begin{theorem}\label{thm:performance_averaging}
Consider a given connected graph $G$ with diameter $D$ and each
node with a distinct value. Then, (using a pseudo-lifted random
walk) it is possible to design an iterative algorithm whose $\ep$-computation
time is $T_\ep =O^*\left(D\log\frac1{\varepsilon}\right)$.
Further, if $G$ has doubling dimension $\rho$, then the network-wide
total number of operations (essentially, additions) per iteration of the
algorithm is $O\left(D n^{1-\frac1{\rho+1}}\right)$.
\end{theorem}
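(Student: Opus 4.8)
The plan is to prove Theorem \ref{thm:performance_averaging} by reducing the design of the averaging algorithm to the pseudo-lifted random walk already constructed in Theorem \ref{thm:performance_pseudo_lifting}, and then invoking the relationship between mixing time and computation time alluded to in Lemma \ref{lemma:running_mixing}. First I would recall from Theorem \ref{thm:performance_pseudo_lifting} that we have an explicit pseudo-lifted random walk on $G$ with mixing time $O(D)$ and size $O\left(Dn^{1-\frac{1}{1+\rho}}\right)$. The key conceptual point, which must be established carefully, is that although pseudo-lifting only preserves the stationary distribution approximately, the restricted projection of the pseudo-lifted stationary distribution recovers the original uniform distribution \emph{exactly}; this is precisely what allows the lifted walk to be used for averaging without introducing bias in the computed value.

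Next I would spell out the implementation on the original graph $G$. Since the pseudo-lifted graph preserves the topology of $G$, each node $i$ maintains several \emph{threads} (one per copy of $i$ in the lifted graph), and one linear-iteration step of the lifted walk is simulated by local message exchanges along edges of $G$. The $\ep$-computation time $T_\ep$ is then controlled via Lemma \ref{lemma:running_mixing}, which states that $T_\ep(P)$ is proportional to the mixing time $\cH(P)$ of the walk. Plugging in $\cH = O(D)$ and tracking the dependence on $\ep$ through the standard exponential decay of the iteration error yields $T_\ep = O^*\left(D\log\frac{1}{\varepsilon}\right)$, where the $O^*$ absorbs the $\poly(\log n)$ factors inherent in the pseudo-lifting mixing bound.

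For the second claim, the per-iteration operation count, I would argue that the network-wide number of additions performed in one iteration is, up to constants, the size of the pseudo-lifted walk, since each nonzero transition entry corresponds to one multiply-add across an edge of $G$. By Theorem \ref{thm:performance_pseudo_lifting} this size is $O\left(Dn^{1-\frac{1}{\rho+1}}\right)$, giving exactly the stated bound. The main obstacle I anticipate is not the arithmetic but the \emph{projection/simulation} argument: one must verify that running the multi-threaded iteration on $G$ and then aggregating (projecting) the threads back at each node reproduces the action of the lifted transition matrix on the projected state, and that the approximate preservation of the stationary distribution under pseudo-lifting does not corrupt the final averaged output. This requires showing that the quantity being averaged lives in the subspace on which the restricted projection is exact, so that the error term in $T_\ep$ comes purely from the mixing rate and not from any systematic bias. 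Once this invariance is checked, both bounds follow directly by substitution into Lemma \ref{lemma:running_mixing} and Theorem \ref{thm:performance_pseudo_lifting}.
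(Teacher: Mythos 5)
Your proposal follows essentially the same route as the paper: simulate the pseudo-lifted chain of Theorem \ref{thm:performance_pseudo_lifting} by multi-threading on $G$, bound $T_\ep$ via Lemma \ref{lemma:running_mixing} using $\cH(\widehat{P})=O(D)$, and charge one operation per nonzero entry of $\widehat{P}$ so the per-iteration cost is the size $O(Dn^{1-\frac{1}{\rho+1}})$ from Lemma \ref{lemma:pseudo_size}. The ``no systematic bias'' point you flag is resolved in the paper exactly as you anticipate: initialize the threads at new (copy) nodes to $0$ so that $\langle y(0),\widehat{\pi}\rangle$ only sees the set $T=V(G)$, where $\widehat{\pi}$ equals $\tfrac12\pi$ exactly, and rescale the final estimate by the factor $2$.
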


As a specific example, recall a $d$-dimensional grid
with doubling dimension $d$ and diameter $n^{1/d}$.
The Metropolis-Hastings method will have mixing time $\Omega\left(n^{2/d}\right)$
and per iteration number of operations $\Theta(n)$. Therefore, the number of total
operations is $O\left(n^{1+\frac{2}{d}}\right)$ (even the randomized gossip algorithm
of \cite{BGPS06} will have this total cost). Compared to this,
Theorem \ref{thm:performance_averaging} implies the number of iterations
would be $O\left(n^{1/d}\right)$ and per iteration cost would be $O\left(n^{1+\frac{1}{d(d+1)}}\right)$.
Therefore, the total cost is $O\left(n^{1+\frac{d+2}{d(d+1)}}\right)$ which is
essentially close to $O\left(n^{1+1/d}\right)$ for large $d$. Thus, it strictly improves
performance over the Metropolis-Hastings method by roughly $n^{1/d}$ factor.
It is worth nothing that no algorithm can have the number of total operations less
than $\Omega\left(n^{1+1/d}\right)$ and the number of iterations less than $\Omega\left(n^{1/d}\right)$.

\vspace{.1in}

For the application of interest of this paper, it was necessary to introduce a
new notion of lifting and indeed we found one such notion, i.e. pseudo-lifting.
In general, it is likely that for certain other applications such a notion
may not exist. For this reason, we undertake the question of designing a lifted
(not pseudo-lifted) random walk with the smallest possible size
since the size (as well as the mixing time) decides the cost of the algorithm that
uses lifting. Note that the average-computing algorithm in Section \ref{sec:five} can
also be implemented via lifting instead of pseudo-lifting,
and  the size of lifting leads to the total number of operations\footnote{One can derive its explicit performance bound as Theorem \ref{thm:performance_averaging}. It turns out that lifting is worse than pseudo-lifting in its performance, but it is more robust in its construction.}. As the first
step, we consider the construction of Chen, Lov\'{a}sz and Pak \cite{CLP02}.
We find that it is rather {\em lossy} in its size. Roughly speaking,
their construction tries to build a logical {\em complete} graph topology
using the underlying graph structure. In order to construct one of $n^2$
edges of this complete graph topology, they use a solution of a flow
optimization problem. This solution results in multiple
paths between a pair of nodes. Thus, in principle, their approach can lead to
a very large size. In order to reduce this size, we use two natural
ideas: one, use a sparse expander graph instead of the complete graph and two, use a solution
of unsplittable flows \cite{KS02}. Intuitively, this approach seems reasonable
but in order to make it work, we need to overcome rather
non-trivial technical challenges. To address these challenges, we develop
a method to analyze hybrid non-reversible random walks, which should be of
interest in its own right. The formal result is stated as follows
and see Section \ref{sec:six} for its complete proof.
\begin{theorem}\label{thm:performance_expander_lifting}
Consider a given connected graph $G$ with a random walk $P$. Then, there exists
a lifted random walk with mixing time $O^*(1/\Phi(P))$ and
size $O^*(|E(P)|/\Phi(P))$, where
$$ E(P) = \{(i,j): P_{ij} \neq 0 ~\mbox{or}~P_{ji}\neq 0\}.$$
\end{theorem}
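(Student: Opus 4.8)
The plan is to improve upon the construction of Chen, Lov\'{a}sz and Pak \cite{CLP02} by replacing their two most expensive ingredients. Recall that their scheme builds a logical complete graph $K_n$ on top of $G$, routing each of the $\Theta(n^2)$ logical edges through $G$ via a multicommodity flow; the lifted walk then essentially simulates the fast-mixing walk on $K_n$, inheriting mixing time $O^*(1/\Phi(P))$ but paying a size cost governed by the total flow congestion. The two sources of blowup are (i) the sheer number of logical edges in $K_n$, and (ii) the fractional (splittable) nature of the flow, which forces each logical edge to be carried by many parallel paths, each contributing to the size of the lifted walk. First I would replace $K_n$ by a constant-degree expander $H$ on the vertex set $V$: since $H$ has $\Phi(H) = \Theta(1)$ and only $O(n)$ edges, simulating a fast walk on $H$ still yields the target mixing time $O^*(1/\Phi(P))$ while cutting the number of logical edges from $\Theta(n^2)$ down to $O(n)$.

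Second, I would route the $O(n)$ logical edges of $H$ through $G$ using an \emph{unsplittable} flow \cite{KS02} rather than a fractional one, so that each logical edge of $H$ is realized by a single path in $G$. The key quantitative handle is the congestion of this unsplittable routing: if every edge of $G$ carries at most $C$ logical paths and each path has length at most $\ell$, then the size contributed by the routing is $O(C\ell |E(H)|)$-type, and one shows via the standard flow--conductance relationship that one may take $\ell = O^*(1/\Phi(P))$ and keep congestion controlled, yielding the claimed size bound $O^*(|E(P)|/\Phi(P))$. The routing of an expander through an arbitrary host graph with congestion tied to $1/\Phi(P)$ is exactly the kind of guarantee that the unsplittable-flow machinery of \cite{KS02} is designed to provide, applied on top of the fact that the original walk $P$ supports a flow certifying its conductance.

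The main obstacle, and the step that will demand genuinely new technical work, is analyzing the mixing time of the resulting lifted walk. Unlike in \cite{CLP02}, the lifted chain here is a \emph{hybrid}: it alternates between local moves along the embedded paths inside $G$ (the ``transport'' phase that realizes a single expander edge) and the logical expander step, and these two dynamics operate on different time and length scales. Because each logical edge is now a single path (rather than a flow), the walk must traverse a whole path to effect one expander transition, and one must argue that this path-traversal does not destroy the fast mixing inherited from the expander $H$. I expect to control this by a comparison or collapsing argument: show that the hybrid chain, observed only at the epochs when it completes a logical expander edge, behaves like the fast-mixing walk on $H$, and that the time spent in the transport phases inflates the mixing time only by the path-length factor $O^*(1/\Phi(P))$. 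Making this rigorous for a non-reversible hybrid chain, where the usual spectral and conductance tools do not directly apply, is precisely the ``method to analyze hybrid non-reversible random walks'' advertised in the introduction, and it is where the bulk of the proof effort will lie. A secondary technical point is verifying that the stationary distribution projects back correctly and that the lifting remains a valid lifting in the formal sense of \cite{CLP02}, but this should follow from the flow-conservation structure of the unsplittable routing once the mixing analysis is in place.
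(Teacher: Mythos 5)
Your high-level architecture matches the paper's: replace the complete graph of \cite{CLP02} by a constant-degree expander $G^{Ex}$ on $V$, route its $O(n)$ logical edges through $G$ by a balanced multi-commodity flow whose congestion and path length are both $O^*(1/\Phi(P))$ (this is exactly what Claim 2.2 of \cite{KS02} gives, since the demands $g(s,t)=\pi_sP^{Ex}_{st}$ are balanced), and then analyze the resulting hybrid non-reversible chain. However, your step (ii) has a genuine gap: you want each expander edge realized by a \emph{single} path, i.e.\ a truly unsplittable routing with congestion $O^*(1/\Phi(P))$. Such a routing need not exist. Unsplittable-flow theorems require a no-bottleneck condition (maximum demand at most minimum capacity), and here each demand is $Q^{Ex}_{st}=\pi_0/d$ while the capacities are $Q_{ij}=\pi_iP_{ij}$, which can be arbitrarily smaller than $\pi_0/d$ (e.g.\ the ring with two edges of transition probability $\delta(n)$ from the introduction); forcing a whole demand onto one path through such an edge blows up the congestion by $Q^{Ex}_{st}/\min_e Q_e$, which is unbounded. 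The paper sidesteps this entirely: it keeps the flow \emph{fractional} but observes that one may choose an extreme point of the flow polytope (equality constraints indexed by $E^{Ex}$, capacity constraints indexed by $E$), so the number of distinct flow-paths in the support is at most $|E^{Ex}|+|E|=O(|E|)$; each path has length $O(W)$, giving size $O(W|E|)=O^*(|E|/\Phi(P))$. You should replace the unsplittable-routing step by this linear-programming support-size argument (or restrict to instances where the no-bottleneck condition holds, which would weaken the theorem).

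On the mixing time, your plan (watch the chain at the epochs where it completes a logical expander transition and compare to the walk on $H$) is the right idea but is only a sketch, and the non-reversibility is exactly where it needs to be made precise. The paper's route is: show that the chain induced on the old nodes satisfies $\widehat{P}^V\geq \frac{1}{2W}P^{Ex}$ and $\widehat{P}^V\geq I/4$ entrywise, pass to the multiplicative reversibilization $\widehat{P}^V(\widehat{P}^V)^*$ and use a comparison-of-Dirichlet-forms claim to get spectral gap $\Omega(\lambda_{P^{Ex}}/W)$, and then convert this into a bound on the stopping-rule mixing time $\widehat{\cH}$ of the full lifted chain via an explicit stopping rule (run for $T=O(W\log(1/\pi_0))$ returns to old nodes, stop there with probability $1/2$, else board a path and stop at a uniform interior node) together with the fill-up lemma. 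Without some such concrete mechanism your ``comparison or collapsing argument'' does not yet constitute a proof; in particular the induced chain on old nodes is not the expander walk itself but only dominates $\frac{1}{2W}P^{Ex}$, and it is this domination plus the $P_1P_1^*$ trick that makes the non-reversible comparison go through.
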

Note that the lifted random walk in \cite{CLP02} has
size $\Omega(n^2/\Phi(P))$, hence
our lifting construction leads to the reduction of its size by $\Theta(n)$ factor
when $G$ is sparse\footnote{A graph $G=(V,E)$ is sparse if $|E|=O(|V|)$.}.
Finally, we note that the methods developed for understanding
the expander-based construction (and proof of Theorem \ref{thm:performance_expander_lifting})
can be useful in making pseudo-lifting more {\em robust}, as discussed
in the Section \ref{sec:conc}.


\section{Preliminaries and Backgrounds}\label{sec:two}

\subsection{Key notions and definitions}\label{subsec:two-one}
In this paper, $G=(V,E)$ is a given graph with $n$ nodes i.e. $|V|=n$.
We may use $V(G)$ to represent vertices of $V$ of $G$.
$P$ always denotes a transition matrix of a graph conformant random walk (or Markov chain) on $G$ with
its stationary distribution $\pi$
i.e. $P_{ij} > 0 $ only if $(i,j) \in E$, and $\pi^TP=\pi^T$.
We will use the notion of ``Markov chain'' or ``random walk'' depending on
which notion is more relevant to the context.
The reverse chain $P^*$ of $P$ is
defined as: $P^*_{ij}=\pi_j P_{ji}/\pi_i$ for all $(i,j) \in E$.
We call $P$ {\em reversible}  if $P=P^*$.
Hence, if $\pi$ is uniform\footnote{$\pi$ is uniform when $\pi_i=1/n,\forall i$.}, $P$ is a symmetric matrix.
The conductance of $P$ is defined as
$$\Phi(P)=\min_{S\subset V}\frac{\sum_{i\in S,j\in
V\backslash S}\pi_i P_{ij}}{\pi(S)\pi(V\backslash S)},$$ where $\pi(A)=\sum_{i\in A}\pi_i$.

Although there are various (mostly equivalent) definitions of Mixing time
that are considered in the literature based on different measures of
the distance between distributions,
we primarily consider the definition of Mixing time
from the stopping rule. A stopping rule $\Gamma$ is a
stopping time based on the random walk of $P$:  at any
time, it decides whether to stop or not, depending on the walk seen
so far and possibly additional coin flips.  Suppose, the starting
node $w^0$ is drawn from distribution $\sigma$. The distribution of
the stopping node $w^{\Gamma}$ is denoted by $\sigma^{\Gamma}=\tau$
and call $\Gamma$ as a stopping rule from $\sigma$ to $\tau$. Let
$\cH(\sigma,\tau)$ be the infimum of mean length over all such
stopping rules from $\sigma$ to $\tau$. This is well-defined as
there exists the following stopping rule from $\sigma$ to $\tau$:
{\em select $i$ with probability $\tau_i$ and walk until getting to
$i$}. Now, we present the definition of
the (stopping rule based) Mixing time
$\mathcal{H}$.

\begin{definition}[Mixing time]
$\mathcal{H}=\max_{\sigma} \mathcal{H}(\sigma,\pi).$
\end{definition}
Therefore, to bound $\mathcal{H}$, we need to design a stopping rule
whose distribution of stopping nodes is $\pi$.
\vspace{-.1in}
\subsection{Metropolis-Hastings method}
\vspace{-.1in}
The Metropolis-Hastings method (or Glauber dynamics \cite{kenyon01glauber}) has been extensively studied
in recent years due to its local constructibility.
For a given graph $G=(V,E)$ and distribution $\pi$ on $V$,
the goal is to produce a random walk $P$ on $G$ whose stationary distribution is $\pi$.
The underlying idea of the random walk produced by this method is
choosing a neighbor $j$ of the current vertex $i$ at uniformly random and
moving to $j$ depending on the ratio between $\pi_i$ and $\pi_j$. Hence, its explicit
transition matrix $P$ is following:
$$P_{ij} =
\begin{cases}
    \frac1{2d}\min\{\frac{\pi_j}{\pi_i},1\} & \text{if }(i,j)\in E\\
    0 & \text{if }(i,j)\notin E ~ \text{and }i\neq j\\
    1-\sum_{k\neq i} P_{ik} & \text{if }i= j\\
  \end{cases},$$
where $d_i$ is a degree of vertex $i$ and $d=\max_i d_i$. It is easy to
check that $\pi^T P=\pi^T$ and $P$ is reversible.
\vspace{-.1in}

\subsection{Lifting}\label{subsec:two-three}
\vspace{-.1in}
As stated in the introduction,
motivated by a simple ring example of Diaconis et al. \cite{DHN97},
Chen et al. \cite{CLP02} use the following notion of lifting.

\begin{definition}[Lifting]
A  random walk $\hat{P}$ on graph $\hat{G}=(\hat{V}, \hat{E})$
is called a lifting of random walk $P$ on graph $G=(V,E)$ if
there exists a many-to-one function $f : \hat{V} \to V$
such that the following holds: (a) for any $\hat{u}, \hat{v} \in \hat{V}$,
$(\hat{u}, \hat{v}) \in \hat{E}$ only if $(f(\hat{u}),f(\hat{v})) \in E$;
(b) for any $u,v\in V$, $\pi(u) = \widehat{\pi}(f^{-1}(u))$ and
$Q(u,v) =\widehat{Q}(f^{-1}(u), f^{-1}(v))$. Here $Q$ and $\hat{Q}$
are ergodic flow
matrices for $P$ and  $\hat{P}$ respectively.
\end{definition}
Here, the ergodic flow matrix $Q = [Q_{ij}]$ of $P$ is defined as: $Q_{ij} = \pi_iP_{ij}$.
It satisfies: $\sum_{i,j}Q_{ij}=1$, $\sum_i Q_{ij}=\sum_i Q_{ji}$
and $\sum_i Q_{ij}=\pi_j$. Conversely, every non-negative
matrix $Q$ with these properties defines a random walk with
the stationary distribution $\pi$.
In this paper, $\widehat{P}$ means a lifted (or pseudo-lifted) random walk of $P$. Similarly
$\widehat{G}$, $\widehat{V}$, $\widehat{E}$ and $\widehat{\pi}$ are the lifted (or pseudo-lifted) versions
of their original one.

Chen et al. \cite{CLP02} provided an
explicit construction to lift a given general random walk $P$ with
almost optimal speed-up in terms of mixing time. Specifically, they obtained the following result.
\begin{theorem}[\cite{CLP02}]\label{thm:CLP}
For a given random walk $P$, it is possible to explicitly construct a lifted random
walk of $P$ with mixing time $O^*(1/\Phi(P))$. Furthermore,
any lifted random walk of $P$ needs at least $\Omega(1/\Phi(P))$ time to mix.
\end{theorem}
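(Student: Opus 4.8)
The plan is to prove the two halves of Theorem \ref{thm:CLP} separately: the lower bound $\cH(\hat P) = \Omega(1/\Phi(P))$ follows almost immediately from the fact that lifting preserves ergodic flow, whereas the matching explicit construction is where the real work lies. I would dispatch the lower bound first, then build toward the upper bound.

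For the lower bound I would exploit defining property (b) of a lifting, namely that it preserves stationary mass and ergodic flow under the projection $f$. Take any cut $S\subset V$ in the base graph and lift it to $\hat S := f^{-1}(S)$. Summing property (b) over the two sides gives $\widehat Q(f^{-1}(S), f^{-1}(V\setminus S)) = Q(S, V\setminus S)$ and $\widehat\pi(f^{-1}(S)) = \pi(S)$, so the numerator and denominator of the conductance ratio for $\hat S$ in $\hat P$ coincide exactly with those for $S$ in $P$. Minimizing over the base cut attaining $\Phi(P)$ then yields $\Phi(\hat P) \le \Phi(P)$. Invoking the general conductance lower bound on mixing already recorded in the introduction, $\cH(\hat P) \ge 1/\Phi(\hat P) \ge 1/\Phi(P)$, which is the claimed $\Omega(1/\Phi(P))$.

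For the upper bound I would aim to simulate, inside $G$, the idealized chain on the complete graph that jumps from any node to $v$ with probability $\pi_v$ and hence reaches $\pi$ in a single step. The obstruction is that such a jump $u\to v$ is not an edge of $G$, so it must be realized by physically walking along a path. I would therefore route a multicommodity flow supported on the edges of $P$ sending $\pi_u\pi_v$ units from $u$ to $v$ for every ordered pair, decompose it into paths, and index the lifted state space by (current position, chosen path / eventual target). Transitions advance deterministically one edge along the current path, and upon arrival at the target a fresh target $v'$ is drawn with probability $\pi_{v'}$. With the path-flow rates normalised appropriately, the resulting $\hat P$ is a genuine lifting: the aggregate flow it induces on each base edge reconstitutes $Q$, and the lifted stationary mass projects to $\pi$. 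The mixing estimate would then come from a stopping rule that runs the chain until it completes one full path traversal and lands at its drawn target, whose expected length is the typical path length times the number of super-steps needed to reach $\hat\pi$.

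The main obstacle is the flow step: I must guarantee that the balanced demands $\pi_u\pi_v$ can be routed within the support of $P$ with small congestion and dilation, and this is exactly where conductance enters. By the approximate max-flow/min-cut duality for uniform multicommodity flow (the Leighton--Rao type bound, which is responsible for the polylogarithmic slack hidden in $O^*$), a congestion of $O(\Phi(P)^{-1}\poly(\log n))$ is achievable, and congestion controls both the per-edge stationary mass of the lifted states and the effective length of a super-step, delivering $\cH(\hat P)=O^*(1/\Phi(P))$. The remaining care is bookkeeping: defining the lifted transition probabilities so that property (b) holds on the nose rather than merely approximately, and converting the congestion/dilation bound into a valid stopping rule whose stopping distribution is exactly $\hat\pi$. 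I would treat the flow-routing lemma as the crux and regard the lifting verification and the stopping-rule length estimate as the technically fiddly but essentially routine remainder.
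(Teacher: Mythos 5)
Your outline is correct and follows essentially the same route as the original argument of \cite{CLP02}, which this paper only cites and summarizes in Section \ref{sec:six}: the lower bound via the fact that a lifting preserves cut flows and stationary masses (so $\Phi(\hat P)\le\Phi(P)$, and conductance lower-bounds mixing), and the upper bound via a multicommodity flow with demands $\pi_s\pi_t$ routed subject to capacities $Q_{ij}$, bounded in congestion and dilation by Leighton--Rao, and turned into a path-indexed lifting with a stopping-rule analysis. No substantive divergence from the paper's account to report.
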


\vspace{-.1in}
\subsection{Auxiliary backgrounds}\label{subsec:two-four}
\vspace{-.1in}
\subsubsection{$\ep$-Mixing time}
\vspace{-.1in}

Here we introduce a different (and related) notion of Mixing time which measures more
explicitly how fast the random walk converges to the stationarity.
The following notions, $\tau(\beps), \tau_2(\beps)$ are related to ${\cal H}$.
This relation can be found in detail in the survey by Lov\'{a}sz and Winkler \cite{LW98}.
For example, we will use this relation explicitly in Lemma \ref{lemma:running_mixing}.

Now we define these related definitions of mixing time. To this end, as before consider
a random walk $P$ on a graph $G=(V,E)$. Let $P^t(x,\cdot)$ denote
the distribution of the state after $t$ steps under $P$,
starting from an initial state $x \in V$. For the random walk of our interest,
$P^t(x,\cdot)$ goes to $\pi$ as $t\rightarrow\infty$. We present the
definitions based on the total variation distance and the $\chi^2$-distance.
\begin{definition}[$\ep$-Mixing time]
Given $\beps > 0$, let $\tau(\beps)$ and $\tau_2(\beps)$ represent
$\beps$-Mixing time of the random walk with respect to the total variation
distance and the $\chi^2$-distance respectively. Then, they are
{\small
\begin{align*}
&\tau(\beps)=\min\lf\{t:\forall x\in V, \frac{1}{2} {\sum_{y\in \Omega}
\left|P^t(x,y)-\pi(y)\right|}\leq \beps \rf\},\\
&\tau_2(\beps)=\min\lf\{t:\forall x\in V,\sqrt{\sum_{y\in
\Omega} \frac1{\pi(y)}\left(P^t(x,y)-\pi(y)\right)^2}\leq \beps \rf\}.
\end{align*}
}
\end{definition}
\vspace{-.1in}
\subsubsection{Additional Techniques to bound Mixing Times}
\vspace{-.1in}

Various techniques have been developed over past three decades or
so to estimate Mixing time of a given random walk.
The relation between the conductance and the mixing time in the introduction is one of them.
We review some of
the key other techniques that will be relevant for this paper.

\vspace{.1in}
\noindent {\it Fill-up Lemma.}
Sometimes, due to the difficulty for
designing such an exact stopping rule, we use the following strategy
for bounding the mixing time $\mathcal{H}$.

\begin{itemize}
\item[] {\em Step 1.} For a positive constant $\beps$ and any starting distribution $\sigma$,
we design a stopping rule whose
stopping distribution $\gamma$ is $\beps$-far from $\pi$ (i.e. $\gamma\geq(1-\varepsilon)\pi$).
This gives the upper bound for $H(\sigma,\gamma)$.
\item[] {\em Step 2.} We bound $\mathcal{H}$ by $H(\sigma,\gamma)$ using
the following fact known as {\em fill-up  Lemma} in \cite{A82}:
$$\mathcal{H}\leq \frac1{1-\varepsilon}
\mathcal{H}_{\underline{\varepsilon}},$$
where
$\mathcal{H}_{\underline{\varepsilon}}=\max_{\sigma} \min_{\gamma\geq(1-\varepsilon)\pi} \mathcal{H}(\sigma,\gamma)$.
\end{itemize}

\vspace{.1in}
\noindent {\it Eigenvalue.} If $P$ is reversible, one can view
$P$ as a self-adjoint operator on a suitable inner product space and
this permits us to use the well-understood spectral theory of
self-adjoint operators. It is well-known that $P$ has $n=|V|$ real
eigenvalues $1=\lambda_0
>\lambda_1 \geq \lambda_2\geq \cdots \geq \lambda_{n-1}> -1$. The
$\beps$-mixing time $\tau_2(\beps)$ is related as
$$\tau_2(\beps)\leq \left\lceil \frac1{\lambda_{P}}\log
\frac1{\beps \sqrt{\pi_0}} \right\rceil,$$ where
$\lambda_{P}=1-\max\{|\lambda_1|,|\lambda_{n-1}|\}$
and $\pi_0=\min_i\pi_i$. The
$\lambda_{P}$ is also called the spectral gap of $P$.
When $P$ is non-reversible, we consider $PP^{*}$. It is easy
to see that the Markov chain with $PP^*$ as its transition matrix
is reversible. Let $\lambda_{PP^*}$ be the spectral gap of this
reversible Markov chain. Then, the mixing time of the original Markov
chain (with its transition matrix $P$) is bounded above as:
\begin{equation}\label{rel:mixing_eigenvalue}
\tau_2(\beps)\leq \left\lceil \frac2{\lambda_{PP^*}}\log
\frac1{\beps \sqrt{\pi_0}} \right\rceil.
\end{equation}


\section{Pseudo-Lifting}\label{sec:three}

Here our aim is to obtain a random walk with mixing time of the order of the diameter for a given
graph $G$ and stationary distribution $\pi$. As explained in the introduction,
the following approach based on lifting does not work for this aim: first obtain a random walk with the
desired stationary distribution using the Metropolis-Hastings method, and then
lift it using the method in \cite{CLP02}.

For example,
consider the {\em
Barbell graph} $G$ as shown in Figure \ref{fig3}:
{\em two complete graphs of $n/2$ nodes
connected by a single edge}. And, suppose $\pi$ is uniform.
Now, consider a random walk $P$ produced by the Metropolis-Hastings method:
the next transition
is uniform among all the neighbor for each node. For such a random
walk, it is easy to check that $1/\Phi(P) = \Omega(n^2)$ and $\mathcal{H}=\Omega(n^4)$.
Therefore, the
mixing time of any lifting is at least $\Omega(n^2)$. However,
this random walk is ill-designed to begin with because $1/\Phi(P)$ can
be decreased up to
$O(n)$ by defining its random walk in another way (i.e.
increasing the probability of its linkage edge, and adding
self-loops to non-linkage nodes not to change its stationary
distribution).
$1/\Phi(P)$ is still far from the diameter $D = O(1)$ nevertheless. Hence,
from Theorem \ref{thm:CLP}, lifting cannot achieve $O(D)$-mixing.

\begin{figure}[htb]
\begin{center}
\centerline{\psfig{figure=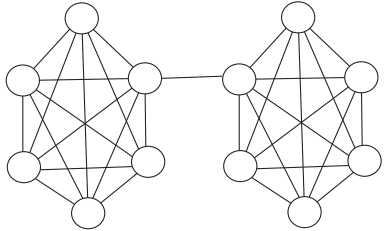,width=5cm,angle=0}}
\caption{The Barbell graph with $12$ nodes.}
\label{fig3}
\end{center}
\end{figure}

Motivated by this limitation,
we will use the following new notion of lifting, which we call pseudo-lifting, to design a
$O(D)$-mixing random walk.

\begin{definition}[Pseudo-Lifting]\label{def:pseudo-lifting}
A random walk $\hat{P}$ is called a pseudo-lifting of $P$
if there exists a many-to-one function $f : \hat{V} \to V$, $T \subset \widehat{V}$ with $|T| =
|V|$ such that the following holds: (a) for any $\hat{u}, \hat{v}
\in \hat{V}$,  $(\hat{u}, \hat{v}) \in \hat{E}$ only if
$(f(\hat{u}),f(\hat{v})) \in E$, and (b) for any $u \in V$,
$\hat{\pi}(f^{-1}(u) \cap T)=\frac12\pi(u).$\footnote{In fact, $\frac12$ can be replaced by any constant between 0 and 1.}
\end{definition}

The property (a) in the definition implies that one can simulate the pseudo-lifting $\widehat{P}$
in the original graph $G$. Furthermore, the property (b) suggests that (by concentrating on the set $T$), it
is possible to simulate the stationary distribution $\pi$ exactly
via pseudo-lifting.
Next we present its
construction.
\vspace{-.1in}

\subsection{Construction}\label{subsec:three-one}
\vspace{-.1in}

For a given random walk $P$, we will construct the pseudo-lifted random walk $\widehat{P}$ of $P$.
It may be assumed that $P$ is given by the Metropolis-Hastings method.
We will construct the pseudo-lifted graph $\widehat{G}$ by adding vertices and edges to $G$, and
decide the values of the ergodic flows $\widehat{Q}$ on $\widehat{G}$, which defines its
corresponding random walk $\widehat{P}$.

First, select an arbitrary node
$v$. Now, for each $w \in V$, there exist paths $\P_{wv}$ and
$\P_{vw}$, from $w$ to $v$ and $v$ to $w$ respectively.  We will assume
that all the paths are of length $D$: this can be achieved by
repeating same nodes. Now, we construct a pseudo-lifted graph
$\hat{G}$ starting from $G$.

First, create a new node $v^{\prime}$ which is a copy of the chosen vertex $v$.
Then, for every node $w$,
add directed paths $\P^{\prime}_{wv}$, a copy of $\P_{wv}$, from $w$ to $v^\prime$. Similarly,
add $\P^{\prime}_{vw}$ (a copy of $\P_{vw}$) from $v^{\prime}$ to $w$. Each
addition creates $D-1$ new interior nodes. Thus, we have essentially created
a {\em virtual star topology} using the paths of the old graph by adding $O(nD)$
new nodes in total. (Every new node is a copy of an old node.)

Now, we define the ergodic flow matrix $\widehat{Q}$  for this graph $\hat{G}$ as follows: for
an edge $(i,j)$,
\begin{align*}
\widehat{Q}_{ij}=\begin{cases}
\frac{\delta_1}{2D}\pi_w\ \ &\text{if}\ (i,j) \in E(\P^{\prime}_{wv})\text{ or }E(\P^{\prime}_{vw})\\
(1-\delta_1)Q_{ij} &\text{if}\ (i,j) \in E(G),
\end{cases}
\end{align*}
where $\delta_1\in [0.1]$ is a constant we will decide later in (\ref{eq:delta1}).
It is easy to check that $\sum_{ij} \widehat{Q}_{ij}=1,\sum_{j}
\widehat{Q}_{ij}=\sum_{j} \widehat{Q}_{ji}$.  Hence it defines a
a random walk on $\hat{G}$. The stationary distribution of this
pseudo-lifting is
\begin{align*}
\widehat{\pi}_i=\begin{cases}
\frac{\delta_1}{2D}\pi_w\ \ &\text{if}\ i\in (V(\P^{\prime}_{wv}) \cup V(\P^{\prime}_{vw})) \backslash \{w,v^{\prime}\}\\
\lf(1-\delta_1 + \frac{\delta_1}{2D}\rf)\pi_i \ \ &\text{if}\ i\in V(G)\\
\frac{\delta_1}{2D}\ \ &\text{if}\ i=v^{\prime}
\end{cases}
\end{align*}
Given the above definition of $\hat{Q}$ and corresponding stationary
distribution $\hat{\pi}$, it satisfies the requirements of pseudo-lifting in Definition \ref{def:pseudo-lifting}
if we choose $\delta_1$ such that
\begin{equation}
1/2= \delta_1 \lf(1-\frac{1}{2D}\rf),\label{eq:delta1}
\end{equation} and
$T=V(G)$; i.e. $T$ is the set of old nodes.
\vspace{-.1in}

\subsection{Mixing time}\label{subsec:three-two}
\vspace{-.1in}

We claim the following bound
on the mixing time of the pseudo-lifting we constructed.
\begin{theorem}\label{thm:mixing_pseudo}
The mixing time of the random walk $\widehat{P}$ defined by $\widehat{Q}$ is
$O(D)$.
\end{theorem}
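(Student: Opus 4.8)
The plan is to exhibit, for every starting distribution $\sigma$, a stopping rule from $\sigma$ to $\widehat{\pi}$ of mean length $O(D)$; since $\cH=\max_\sigma\cH(\sigma,\widehat{\pi})$, this gives the bound. The whole construction hinges on treating the extra node $v^{\prime}$ as a regeneration (``dispatch'') point. First I would record two structural facts that follow directly from the definitions of $\widehat{Q}$ and $\widehat{\pi}$. (1) From $v^{\prime}$ the walk moves onto the out-path $\P^{\prime}_{vw}$ with probability $\widehat{Q}_{v^{\prime}\cdot}/\widehat{\pi}_{v^{\prime}}=\pi_w$ and then traverses $\P^{\prime}_{vw}$ deterministically (interior path nodes carry only the path flow), so after exactly $D$ steps it sits at the old node $w$ with $w\sim\pi$. (2) At every old node $w\in V(G)$ the probability of stepping onto its in-path $\P^{\prime}_{wv}$ equals $q=\frac{\delta_1/(2D)}{1-\delta_1+\delta_1/(2D)}=\Theta(1/D)$, a constant \emph{independent of $w$}, and conditioned on not doing so the walk makes an exact $P$-step to another old node. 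From (1)--(2) the hitting time of $v^{\prime}$ from any state is $O(D)$ in expectation: from an old node one waits a Geometric$(q)$ number of $P$-steps (mean $\Theta(D)$) before entering an in-path and then needs at most $D$ deterministic steps to reach $v^{\prime}$, while interior states reach $v^{\prime}$ (or an old node) within $D$ deterministic steps.

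Next I would partition $\widehat{V}$ into three regions -- $A=\{v^{\prime}\}\cup(\text{out-path interiors})$, the old nodes $V(G)$, and the in-path interiors -- and build a stopping rule for each that lands exactly on $\widehat{\pi}$ conditioned on that region, in mean time $O(D)$. For $V(G)$: reach $v^{\prime}$ and dispatch, stopping at the endpoint $w$; by (1) this gives $\widehat{\pi}|_{V(G)}$ (which is proportional to $\pi$) after normalization. For $A$: reach $v^{\prime}$, begin a dispatch, and stop at a position chosen uniformly in $\{0,1,\dots,D-1\}$ along the out-path (position $0$ meaning $v^{\prime}$ itself); comparing $\frac1D\pi_w$ against $\widehat{\pi}_{u}=\frac{\delta_1}{2D}\pi_w$ and $\frac1D$ against $\widehat{\pi}_{v^{\prime}}=\frac{\delta_1}{2D}$ shows the stopping law is $\widehat{\pi}|_A$ normalized. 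For the in-path interiors: first reach $\pi$ on $V(G)$ as above, then run the old-node walk until it steps onto some in-path and stop at a uniformly random interior position of that path.

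The crux -- and the step I expect to need the most care -- is verifying that this last subroutine produces $\widehat{\pi}$ restricted to the in-path interiors, i.e.\ that the entered path is $\P^{\prime}_{wv}$ with $w\sim\pi$ and the interior position is uniform. Uniformity of the position is immediate from the deterministic structure of an in-path once it is entered. The distribution over $w$ is where fact (2) is essential: because the killing probability $q$ is the \emph{same} constant at every old node, the time $\tau$ at which the walk first steps onto an in-path is an independent Geometric clock, while the underlying $P$-walk started from $\pi$ keeps $X_t\sim\pi$ for every $t$ (as $\pi^TP=\pi^T$); hence $P(X_\tau=w)=\sum_t P(\tau=t)\pi_w=\pi_w$, exactly as required. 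Were $q$ to depend on $w$, this independence would break and the exit node would come out size-biased, so the node-independence of $q$ is the structural feature that makes the subroutine work.

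Finally I would assemble the three subroutines into a single mixture, invoking each with probability equal to the $\widehat{\pi}$-mass of its region: $\widehat{\pi}(A)=\delta_1/2$, $\widehat{\pi}(V(G))=1-\delta_1+\delta_1/(2D)$, and $\widehat{\pi}(\text{in-interiors})=(D-1)\delta_1/(2D)$, which sum to $1$. Since each branch reaches the corresponding conditional of $\widehat{\pi}$, the mixture reaches $\widehat{\pi}$ exactly, and since each branch has mean length $O(D)$ so does the mixture. This bounds $\cH(\sigma,\widehat{\pi})$ by $O(D)$ uniformly in $\sigma$, giving $\cH=O(D)$. If one prefers to avoid constructing the exact conditionals, the same ingredients combined with the fill-up Lemma -- it suffices to reach a distribution dominating a constant multiple of $\widehat{\pi}$ -- yield the bound just as well.
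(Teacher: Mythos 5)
Your proposal is correct and is essentially the paper's own argument: both build a stopping rule that regenerates at $v^{\prime}$, dispatches along an out-path landing at $w\sim\pi$, exploits the state-independent probability $\Theta(\delta_1/D)$ of entering an in-path (so that $\pi$ is preserved up to the entry time), and picks interior positions uniformly, yielding an exact $\widehat{\pi}$-stopping rule of mean length $O(D/\delta_1)=O(D)$. The only difference is organizational — you mix three region-conditional subroutines with weights $\widehat{\pi}(A)$, $\widehat{\pi}(V(G))$, $\widehat{\pi}(\text{in-interiors})$, whereas the paper walks to $v^{\prime}$ first and then tosses a four-valued coin; the two rules coincide after commuting the independent coin with the walk.
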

\begin{proof}
We will design a stopping rule where the distribution of the stopping node is $\widehat{\pi}$,
and analyze its expected length. At first, walk until visiting
$v^{\prime}$, and toss a coin $X$ with the following probability.
{\small
\begin{align*}
X=\begin{cases} 0\ &\text{with probability}\
\frac{\delta_1}{2D}\\
1\ \ &\text{with probability}\
\frac{\delta_1 (D-1)}{2D}\\
2\ \ &\text{with probability}\
1-\delta_1+\frac{\delta_1}{2D}\\
3\ \ &\text{with probability}\
\frac{\delta_1 (D-1)}{2D}\\
\end{cases}
\end{align*}}
Depending on the value of $X$, the stopping node is decided as follows.
\begin{itemize}
\item[$\circ$] $X=0$ : {\em Stop at $v^{\prime}$}.  The probability for stopping at $v^{\prime}$ is $\Pr[X=0]=\frac{\delta_1}{2D}$, which is exactly $\widehat{\pi}_{v^{\prime}}$.
\item[$\circ$] $X=1$ : {\em Walk a directed path $P^{\prime}_{vw}$, and choose an interior node
of $P^{\prime}_{vw}$ uniformly at random, and stop there}.
For a given $w$, the probability for walking $P^{\prime}_{vw}$ is easy to check $\pi_w$.
There are $D-1$ many interior nodes, hence, for an interior node $i$ of $P^{\prime}_{vw}$,
the probability for stopping at $i$ is $$\Pr[X=1]\times \pi_w \times \frac1{D-1}=
\frac{\delta_1}{2D}\pi_w=\widehat{\pi}_i.$$
\item[$\circ$] $X=2$ : {\em Stop at the end node $w$ of $P^{\prime}_{vw}$}.
The probability for stopping at $w$ is $$\Pr[X=2]\times \Pr[\text{walk }P^{\prime}_{vw}]=
\lf(1-\delta_1+\frac{\delta_1}{2D}\rf)\times \pi_w=\widehat{\pi}_w.$$
\item[$\circ$] $X=3$ : {\em Walk until getting a directed path
$P^{\prime}_{wv}$, and choose an interior node of $P^{\prime}_{wv}$
uniformly at random, and stop there.}
Until getting a directed path
$P^{\prime}_{wv}$, the pseudo-lifted random walk defined by $\widehat{Q}$ is same as
the original random walk. Since the distribution $w \in V(G)$ of the walk
at the end of the previous step is exactly $\pi$,
it follows that the distribution $\pi$ over the nodes of $V(G)$
is preserved under this walk till walking on $P^{\prime}_{wv}$.
From the same calculation as the case $X=1$, the probability of
stopping at the interior node $i$ of $P^{\prime}_{wv}$ is $\widehat{\pi}_i$.
\end{itemize}
Therefore, we have established the existence of a stopping rule that takes an
arbitrary starting distribution to the stationary distribution $\hat{\pi}$. Now, this stopping
rule has an average length $O(D/\delta_1)$: since the probability of getting on a
directed path $P^{\prime}_{wv}$ at $w$ is $\frac{\delta_1}{2D}/(1-\delta_1+\frac{\delta_1}{2D})
=\Theta(\delta_1/D)$,  the expected
numbers of walks until visiting $v^{\prime}$ and getting a directed path when $X=3$ are
$O(D/\delta_1)=O(D)$ from (\ref{eq:delta1}) in both cases. This completes the proof.
\end{proof}

\vspace{-.1in}

\section{Pseudo-Lifting: use of geometry}\label{sec:four}

\vspace{-.1in}

The graph topologies arising in practice, such as those in wireless
sensor network deployed in some geographic area or a nearest neighbor
network of unmanned vehicles \cite{ks-fb-ef:05j},
do possess {\em geometry} and are far from being expanders.
A good model for graphs with geometry is a class of graphs with finite doubling dimension
which is defined as follows.

\begin{definition}[Doubling Dimension]
Consider a metric space $\cM =
(\cX, \bd)$, where $\cX$ is the set of point endowed with a metric
$\bd$.  Given $x \in \cX$, define a ball of radius $r \in \Rp$
around $x$ as $\bB(x, r) = \{ y \in \cX : \bd(x, y) < r\}$. Define
$$\rho(x, r) = \inf \{ K \in \N:  \exists ~y_1,\dots, y_K \in \cX,  \bB(x,r) \subset \cup_{i=1}^K \bB(y_i, r/2) \}.$$
Then, the $\rho(\cM) = \sup_{x \in \cX, r \in \Rp} \rho(x,r)$ is
called  the {\em doubling constant} of $\cM$ and $\log_2 \rho(\cM)$
is called the {\em doubling dimension} of $\cM$. The doubling dimension of a graph
$G=(V,E)$ is defined with respect to the metric induced on $V$ by the shortest
path metric.
\end{definition}

For graphs with finite doubling dimension, we will design a pseudo-lifting with its efficient size.
Recall the basic idea for the construction of the pseudo-lifting in Section \ref{sec:three}
is creating a {\em virtual star topology} using paths from every node to a fixed root,
and the length of paths grows the size of the pseudo-lifting.
To reduce the overall length of paths, we consider clusters of nodes such that
nodes in each cluster are close to each other, and pick a sub-root node in each cluster.
And then, build a star topology in each cluster around its sub-root and
connect every sub-root to the root. This creates a {\em hierarchical star topology} (or say a {\em tree topology})
as you see the example of the line graph in Figure \ref{fig4}(b).
Since it needs paths of short length in each cluster, the overall length of paths would be decreased.

For a good clustering,
we need to decide which nodes would become sub-roots. A natural candidate for them
is the $R$-net $Y \subset V$ of a graph $G$ defined as follows.
\begin{definition}[$R$-net]
For a given graph $G=(V,E)$, $Y \subset V$ is a $R$-net if
\begin{itemize}
\item[(a)] For every $v \in V$, there exists $u \in Y$ such that the shortest
path distance between $u, v$ is at most $R$.
\item[(b)] The distance between
any two $y, z \in Y$ is more than $R$.
\end{itemize}
\end{definition}

Such an $R$-net can be found
in $G$ greedily, and as you will see the proof of Lemma \ref{lemma:pseudo_size},
the small {\em doubling dimension} of $G$
guarantees the existence of a good $R$-net for our purpose.

\begin{figure}[htb]
\begin{center}
\centerline{\psfig{figure=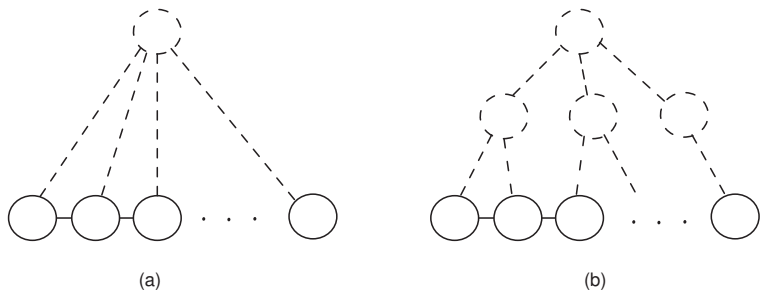,width=10cm,angle=0}}
\caption{For a given line graph with $n$ nodes,
(a) is the star topology which used in the construction of the pseudo-lifted graph
in Section \ref{subsec:three-one}. (b) is the hierarchical star topology which will be
used in this section for the new construction of pseudo-lifting.}
\label{fig4}
\end{center}
\end{figure}
\vspace{-.1in}

\subsection{Construction}\label{subsec:four-one}
\vspace{-.1in}

For a given random walk $P$, we will construct the pseudo-lifted random walk $\widehat{P}$ of $P$
using a hierarchical star topology.
Denote $\pi$ and $G=(V,E)$ be the stationary distribution and the underlying graph of $P$ again.
As the previous construction in Section \ref{subsec:three-one},
we will construct the pseudo-lifted graph $\widehat{G}$ by extending $G$, and define the ergodic flow
matrix $\widehat{Q}$ on $\widehat{G}$, which leads to its
corresponding random walk $\widehat{P}$.

Given a $R$-net $Y$, match each node $w$ to the
nearest $y \in Y$ (breaking ties arbitrarily). Let $C_y =\{w|\ w$
matched to $y\}$ for $y \in Y$. Clearly, $V = \cup_{y \in Y} C_y$.
Finally, for each $y \in Y$ and for any $w \in C_y$ we have paths
$\P_{wy}, \P_{yw}$ between $w$ and $y$ of length $R$ exactly.
Also, for each $y \in Y$, there exit $\P_{yv},\P_{vy}$ between $y$ and $v$ of length
$D$ exactly (we allow the repetition of nodes to hit this length exactly).

Now, we construct the pseudo-lifted graph $\hat{G}$. As the construction in Section \ref{subsec:three-one},
select an
arbitrary node $v \in V$ and create its copy $v^\prime$ again.
Further, for each $y \in Y$, create two copies $y_1^\prime$ and
$y_2^\prime$. Now, add directed paths $\P^{\prime}_{wy}$, a copy of
$\P_{wy}$, from $w$ to $y^{\prime}_1$ and add $\P^{\prime}_{yv}$, a
copy of $\P_{yv}$, from $y^{\prime}_1$ to $v^{\prime}$. Similarly,
add $\P^{\prime}_{vy}$ and $\P^{\prime}_{yw}$ between $v^\prime$,
$y^{\prime}_2$ and $y^{\prime}_2$, $w$. In total, this construction for $\hat{G}$ adds $2D
|Y| + 2Rn$ edges to $G$. Now, the ergodic flow matrix
$\widehat{Q}$ on $\hat{G}$ is defined as follows: for any $(i,j)$ of
$\hat{G}$,
\begin{align*}
\widehat{Q}_{ij}=\begin{cases}
\frac{\delta_2}{2(R+D)}\pi_w\ \ &\text{if}\ (i,j) \in E(\P^{\prime}_{wy})\text{ or }E(\P^{\prime}_{yw})\\
\frac{\delta_2}{2(R+D)}\pi(C_y)\ \ &\text{if}\ (i,j) \in E(\P^{\prime}_{yv})\text{ or }E(\P^{\prime}_{vy})\\
(1-\delta_2)Q_{ij} &\text{if}\ (i,j) \in E(G)
\end{cases},
\end{align*}
where $\pi(C_y)=\sum_{w\in C_y} \pi_w$ and $\delta_2\in [0.1]$ is a constant decided later\footnote{
See the equation (\ref{eq:delta2}) and check $\delta_2\approx 1/2.$}.
It can be checked that
$\sum_{ij} \widehat{Q}_{ij}=1,\sum_{j} \widehat{Q}_{ij}=\sum_{j}
\widehat{Q}_{ji}$. Hence it defines a random walk on $\hat{G}$. The
stationary distribution of this pseudo-lifted chain is
{\small \begin{align*}
\widehat{\pi}_i=\begin{cases}
\frac{\delta_2}{2(R+D)}\pi_w\ \ &\text{if}\ i\in (V(\P^{\prime}_{wy}) \cup V(\P^{\prime}_{yw})) \backslash \{w,y^{\prime}_1,y^{\prime}_2\}\\
\frac{\delta_2}{2(R+D)}\pi(C_y)\ \ &\text{if}\ i\in (V(\P^{\prime}_{yv}) \cup V(\P^{\prime}_{vy})) \backslash \{v^{\prime}\}\\
\lf(1-\delta_2(1-\frac{\delta_2}{2(R+D)})\rf)\pi_i\ \ &\text{if}\ i\in V(G)\\
\frac{\delta_2}{2(R+D)}\ \ &\text{if}\ i=v^{\prime}
\end{cases}
\end{align*}}
To guarantee that this chain is indeed the pseudo-lifting of the original random walk
$P$, consider $T = V(G)$ and $\delta_2$, where
\begin{equation}
\frac12 = \delta_2 \lf(1-\frac{1}{2(R+D)}\rf).\label{eq:delta2}
\end{equation} Note that $\widehat{G}$ has exactly
$|E|+2Rn+2D|Y|$ edges.
\vspace{-.1in}

\subsection{Mixing time and Size: Proof of Theorem \ref{thm:performance_pseudo_lifting}}\label{subsec:four-two}
\vspace{-.1in}

We prove two Lemmas about the performance of pseudo-lifting we constructed, and they imply
Theorem \ref{thm:performance_pseudo_lifting}.
At first, we state the
following result about its mixing time, and the proof
can be done similarly as the proof of Theorem \ref{thm:mixing_pseudo}.
\begin{lemma}\label{lemma:pseudo_mixing}
The mixing time of the random walk $\widehat{P}$ defined by $\widehat{Q}$ is
$O(D)$.
\end{lemma}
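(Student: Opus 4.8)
The plan is to mimic exactly the stopping-rule construction used in the proof of Theorem~\ref{thm:mixing_pseudo}, adapting it to the two-level (hierarchical) star topology. The goal is again to exhibit a stopping rule whose stopping distribution is precisely $\widehat{\pi}$ and whose expected length is $O(D)$. First I would have the walk run until it visits $v^\prime$, and then toss a coin to decide which ``type'' of node to stop at. The difference from the single-star case is that there are now more categories of nodes in $\widehat{G}$: the copy $v^\prime$; the interior nodes of the long paths $\P^\prime_{yv}, \P^\prime_{vy}$ connecting sub-roots to the root (with $\widehat{\pi}$-mass proportional to $\pi(C_y)$); the interior nodes of the short paths $\P^\prime_{wy}, \P^\prime_{yw}$ connecting ordinary nodes to their sub-roots (mass proportional to $\pi_w$); and the old nodes $i \in V(G)$. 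So the coin $X$ will now have roughly five or six outcomes instead of four, with probabilities read off directly from the case analysis for $\widehat{\pi}_i$ given in the construction.

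Concretely, for each outcome of $X$ I would specify a stopping behavior and verify, exactly as in the four bullet points of the proof of Theorem~\ref{thm:mixing_pseudo}, that the induced probability of stopping at any particular node $i$ equals $\widehat{\pi}_i$. For stopping on an interior node of a path I would walk the appropriate directed path and then pick one of its $D-1$ (respectively $R-1$) interior nodes uniformly at random; the product of the coin probability, the probability of selecting that path, and the uniform $1/(D-1)$ (or $1/(R-1)$) factor should reproduce $\widehat{\pi}_i$ after cancellation. The one genuinely substantive point, carried over verbatim from the earlier proof, is that while the walk moves through the old graph $G$ before latching onto a path $\P^\prime_{wy}$, its distribution over $V(G)$ is preserved as $\pi$ (since on $E(G)$ the flow $\widehat{Q}$ is just a scaling $(1-\delta_2)Q$ of the original), so the probability of reaching a given $w$ and then taking its outgoing path works out to the right value. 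This is the step I expect to require the most care, because now there are \emph{two} nested ``walk-until-you-catch-a-path'' events (first catching a short path to reach a sub-root $y$, implicitly via $C_y$, and separately catching the long path from $v^\prime$), and one must check the $\pi(C_y)$ weighting is consistent with summing $\pi_w$ over $w\in C_y$.

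For the expected-length bound I would argue as in the last paragraph of the proof of Theorem~\ref{thm:mixing_pseudo}: the probability per step of latching onto any outgoing copied path is $\Theta(\delta_2/(R+D))$ by inspection of $\widehat{Q}$ on $E(G)$, so the expected number of original-graph steps before catching a path is $O((R+D)/\delta_2)$. Since $\delta_2 = \Theta(1)$ by (\ref{eq:delta2}) and $R \le D$, each such waiting time is $O(D)$; the additional deterministic cost of traversing a path (length $R$ or $D$) is also $O(D)$, and only a constant number of such phases occur. Hence the total expected length is $O(D)$, giving $\widehat{\cH} = O(D)$ as claimed. The main obstacle is purely bookkeeping: getting the coin probabilities and the path-selection weights to telescope correctly against the piecewise definition of $\widehat{\pi}$, especially reconciling the $\pi(C_y)$-weighted long-path nodes with the $\pi_w$-weighted short-path nodes, rather than any new conceptual difficulty beyond what Theorem~\ref{thm:mixing_pseudo} already handles.
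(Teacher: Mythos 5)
Your proposal matches the paper's proof essentially verbatim: the paper also runs the walk until it visits $v^{\prime}$, tosses a coin with six outcomes whose probabilities are read off from the piecewise definition of $\widehat{\pi}$ (covering $v^{\prime}$, the interior nodes of $\P^{\prime}_{vy}$, $\P^{\prime}_{yw}$, $\P^{\prime}_{wy}$, $\P^{\prime}_{yv}$, and the old nodes of $V(G)$), verifies the stopping distribution case by case as in Theorem~\ref{thm:mixing_pseudo}, and bounds the expected length by $O((R+D)/\delta_2)=O(D)$ using the $\Theta(\delta_2/(R+D))$ per-step probability of latching onto a copied path. The bookkeeping concerns you flag (the $\pi(C_y)$ versus $\pi_w$ weighting and the nested path-catching events) are exactly the details the paper delegates to "arguments similar to" the earlier proof, so your route is the same as theirs.
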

\begin{proof}
Consider the following stopping rule. Walk until visiting
$v^{\prime}$, and toss a coin $X$ with the following probability.
{\small
\begin{align*}
X=\begin{cases} 0\ &\text{with probability}\
\frac{\delta_2}{2(R+D)}\\
1\ \ &\text{with probability}\
\frac{\delta_2 D}{2(R+D)}\\
2\ \ &\text{with probability}\
\frac{\delta_2(R-1)}{2(R+D)}\\
3\ \ &\text{with probability}\
1-\delta_2(1-\frac{\delta_2}{2(R+D)})\\
4\ \ &\text{with probability}\
\frac{\delta_2(R-1)}{2(R+D)}\\
5\ \ &\text{with probability}\
\frac{\delta_2 D}{2(R+D)}\\
\end{cases}
\end{align*}}
Depending on the value of $X$,
\begin{itemize}
\item[$\circ$] $X=0$ : Stop at $v^{\prime}$.
\item[$\circ$] $X=1$ : Walk on a directed path $\P^{\prime}_{vy}$, and choose
its interior node uniformly at random, and stop there.
\item[$\circ$] $X=2$ : Walk until getting a directed path $\P^{\prime}_{yw}$, and choose
its interior node uniformly at random, and stop there.
\item[$\circ$] $X=3$ : Walk until getting an old node in $V(G)$, and stop there.
\item[$\circ$] $X=4$ : Walk until getting a directed path $\P^{\prime}_{wy}$, and choose
its interior node uniformly at random, and stop there.
\item[$\circ$] $X=5$ : Walk until getting a directed path $\P^{\prime}_{yv}$, and choose
its interior node uniformly at random, and stop there.
\end{itemize}
It can be checked, using arguments similar to that in proof of Theorem
\ref{thm:mixing_pseudo}, that the distribution of the stopped node is
precisely $\widehat{\pi}$. Also,
we can show that the
expected length of this stopping rule is
$O(\frac{R+D}\delta_2)=O(\frac{D}\delta_2)=O(D)$ from (\ref{eq:delta2}).
This is primarily true because the probability of getting on a directed
path $\P^{\prime}_{wy}$ at $w$ is $\Theta(\delta_2/(R+D))$.
\end{proof}

Now we apply the hierarchical construction to the case of graphs with constant doubling dimension,
and show the guarantee for the size of the pseudo-lifting in terms of its doubling dimension.

\begin{lemma}\label{lemma:pseudo_size}
Given a graph $G$ with a constant doubling dimension $\rho$ and its diameter
$D$, the hierarchical construction gives
a pseudo-lifted graph $\hat{G}$ with its size $|\widehat{E}|=O(Dn^{1-\frac1{\rho+1}})$.
\end{lemma}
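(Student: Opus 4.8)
The plan is to bound the size $|\widehat{E}| = |E| + 2Rn + 2D|Y|$ by optimizing over the net radius $R$, where the only nontrivial quantity is $|Y|$, the cardinality of the $R$-net. I would first observe that the term $2Rn$ grows linearly in $R$ while the term $2D|Y|$ shrinks as $R$ increases, since larger clusters mean fewer sub-roots; hence there is a tradeoff to be optimized. The heart of the argument is therefore an upper bound on $|Y|$ in terms of $R$, $n$, and the doubling dimension $\rho$, after which the final size follows by choosing $R$ to balance the two competing terms.

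First I would use the packing property of the $R$-net together with the doubling dimension to bound $|Y|$. By definition of an $R$-net, the balls $\{\bB(y, R/2)\}_{y \in Y}$ are pairwise disjoint (since any two net points are at distance more than $R$), so the net points are $R/2$-separated. The doubling dimension controls how many such separated points can fit: iterating the doubling property $\rho$ (the doubling dimension $=\log_2\rho(\cM)$) tells us that a ball of radius $D$ (the whole graph, since $D$ is the diameter) can be covered by roughly $(D/R)^{\rho}$ balls of radius $R/2$, and each such small ball contains at most one net point. This yields a bound of the form $|Y| = O\big((D/R)^{\rho}\big)$. I would state this packing/covering estimate as the key geometric lemma and derive it by repeatedly applying the definition of $\rho(\cM)$ to halve radii from $D$ down to $R/2$, accumulating a factor of the doubling constant at each of the $O(\log(D/R))$ halving steps.

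With $|Y| = O\big((D/R)^{\rho}\big)$ in hand, the size becomes
\begin{equation*}
|\widehat{E}| = |E| + 2Rn + 2D|Y| = O\!\left(n + Rn + D\left(\frac{D}{R}\right)^{\rho}\right).
\end{equation*}
Since $G$ is connected we have $|E| = \Omega(n)$, and the dominant terms are $Rn$ and $D^{\rho+1}/R^{\rho}$. I would then balance these by setting $Rn \approx D^{\rho+1}/R^{\rho}$, i.e. $R^{\rho+1} \approx D^{\rho+1}/n$, giving $R \approx D\, n^{-1/(\rho+1)}$. Substituting back, each term becomes $O\big(D\, n^{1 - 1/(\rho+1)}\big)$, which is exactly the claimed bound. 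I would close by noting the minor bookkeeping: one must check $R \ge 1$ so that a genuine net exists (this holds in the regime of interest since $D^{\rho+1} \ge n$ for graphs of doubling dimension $\rho$, as the whole graph is covered by $O(D^{\rho})$ unit balls), and that the additive $|E|$ term is absorbed.

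The main obstacle is establishing the covering bound $|Y| = O\big((D/R)^{\rho}\big)$ cleanly from the definition of doubling dimension, since that definition is phrased as a single halving ($\bB(x,r)$ covered by $\rho(\cM)$ balls of radius $r/2$) and must be iterated carefully across $\Theta(\log(D/R))$ scales while keeping the exponent exactly $\rho$ rather than losing logarithmic factors. Once that geometric packing estimate is pinned down, the remainder is the elementary optimization of $R$ described above, and the stated size bound $O\big(Dn^{1-1/(\rho+1)}\big)$ drops out directly.
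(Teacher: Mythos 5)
Your proposal is correct and follows essentially the same route as the paper: bound $|Y|=O((D/R)^{\rho})$ via the doubling property (the paper simply cites this net-size bound rather than re-deriving it by iterated halving), choose $R\approx Dn^{-1/(\rho+1)}$ to balance $2Rn$ against $2D|Y|$, and check $R\geq 1$ using $n=O(D^{\rho})$. The iteration you flag as the main obstacle is standard and loses no logarithmic factors, since $\Theta(\log(D/R))$ halvings each contribute a factor of the doubling constant, yielding exactly $(D/R)^{\rho}$ up to constants.
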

\begin{proof}
The property of doubling dimension graph implies that there exists an
$R$-net $Y$ such that $|Y|\leq (2D/R)^\rho$ (cf. \cite{A83}). Consider
$R=D2^\frac{\rho}{\rho+1}n^{-\frac1{\rho+1}}$. This is an appropriate choice
because
$R=D2^\frac{\rho}{\rho+1}n^{-\frac1{\rho+1}}>Dn^{-\frac1{\rho+1}}>n^{\frac1{\rho}-\frac1{\rho+1}}>1$
(the second inequality is from $n \leq D^\rho$).
Given this, the size of the pseudo-lifted graph $\hat{G}$ is
$$|\widehat{E}|=|E|+2Rn+2D|Y|\leq|E|+2D\lf(\frac{2^\frac{\rho}{\rho+1}}{n^{\frac1{\rho+1}}}\rf)n+
2D\lf(2\frac{n^{\frac1{\rho+1}}}{2^\frac{\rho}{\rho+1}}\rf)^\rho=|E|+O(Dn^{1-\frac1{\rho+1}}).$$
Since $|E| = O(n)$ and $D = \Omega(n^{1/\rho})$, we have that $|\hat{E}| = O(Dn^{1-\frac1{\rho+1}})$.
\end{proof}

\vspace{-.1in}
\section{Application: Back to Averaging}\label{sec:five}
\vspace{-.1in}

As we introduced in the introduction,
consider the following computation problem of the distributed averaging.
Given a connected network graph $G=(V,E)$, where $V=\{1,2,\ldots n\}$, each node $i\in V$ has a value $x_i \in \mathbb{R}$.
Then the goal is to compute the average of $\bx=[x_i]$ only by communications between adjacent nodes:
\beq
x_{ave}=\frac1n \sum_i x_i.
\eeq
This problem arises in many applications such
as distributed estimation \cite{T84}, distributed spectral decomposition \cite{KM04}, estimation and
distributed data fusion on ad-hoc networks \cite{MFHH02}, distributed sub-gradient method for eigenvalue maximization \cite{bgps}, inference in Gaussian graphical models \cite{MJW06}, and coordination of autonomous agents \cite{JLS03}.

\vspace{-.1in}
\subsection{Linear iterative algorithm}\label{subsec:five-one}
\vspace{-.1in}

A popular and quite simple approach for this computation is a method based on linear iterations \cite{BX04} as follows.
Suppose we are given with a graph conformant random walk $P$ which has the uniform stationary distribution $\pi$
i.e. $\pi^T P=\pi^T$.
The linear iteration algorithm is described as follows. At time $t$, each node $i\in V$ has an estimate $y_i(t)$ of $x_{ave}$ and initially $y_i(0)=x_i$. At time $t=1,2,\ldots$ for each edge $(i,j)$ of $G$, node $i$ sends value $P_{ji} y_i(t)$ to node $j$. Then each node $j$ sums up the values received as its estimate at time $t+1$, that is
$$y_j(t+1)=\sum_{i=1}^n P_{ji} y_i(t).$$
Under the condition that $P$ is ergodic, i.e. $P$ is connected and aperiodic, it is known that \cite{BX04}
$$\lim_{t\rightarrow \infty} y(t)=\lim_{t\rightarrow \infty} P^t \bx = \left(\sum_i x_i\right) \pi =\frac1n\sum_i x_i\mathbf{1}
=x_{ave}\mathbf{1},~ \mbox{where} ~ \mathbf{1}=[1].$$
Specifically, as we already saw in the introduction, $\ep$-computation time $T_\ep(P)$ is defined as:
\beq
T_\ep(P)& = & \inf\left\{t~:~\forall \bx \in \R_+^n,
\frac{\|P^t\bx - x_{ave} \bone\|_\infty}{x_{ave}} \le \ep \right\}.
\eeq
The quantity $T_\ep(P)$ is well known to be related to the mixing time $\mathcal{H}(P)$. More precisely,
we prove Lemma \ref{lemma:running_mixing}, which implies \beq T_\ep(P)=O^*\left(\mathcal{H}(P) \log \frac{1}{\ep}  \right).\label{eq:mixing}\eeq
Since each edge $(i,j)$ such that $P_{ij}>0$ performs an exchange of values per each iteration,
the number of operations performed per iteration across
the network is at most $|E|$.
Thus, the total number of operations of the linear iterations to obtain the approximation of $x_{ave}$ scales like
\beq C_\ep(P):=T_\ep(P)\times |E|.\eeq
Therefore, the task of designing an appropriate $P$ with small $\mathcal{H}(P)$
is important to minimize both $T_\ep(P)$ and $C_\ep(P)$.

\vspace{-.1in}
\subsection{Linear iterative algorithm with pseudo-lifting: Proof of Theorem \ref{thm:performance_averaging}}\label{subsec:five-two}
\vspace{-.1in}

We present a linear iterative algorithm that utilizes the pseudo-lifted version
of a given matrix $P$ on the original graph $G$. The main idea behind this
implementation is to run the standard linear iterations in $\widehat{G} = (\hat{V},\hat{E})$
with the pseudo-lifted chain $\widehat{P}$. However, we wish to implement
this on $G = (V, E)$ and not $\widehat{G}$. Now recall that $\widehat{G}$
has the following property: (a) each node $\hat{v} \in \hat{V}$ is a copy of
a node $v \in V$, and (b) each edge $(\hat{u}, \hat{v})$ is a copy of
edge $(u,v) \in E$, where $\hat{u}, \hat{v}$ are copies of $u, v \in V$
respectively. Therefore, each node $\hat{v} \in V$ can be {\em simulated}
by a node $v \in V$ where $\hat{v}$ is a copy of $v$ for the purpose of linear
iterations. Thus, it is indeed possible to simulate the pseudo-lifted version of
a matrix $P$ on $G$ by running multiple threads (in the language of the computer
programming) on each node of $G$. We state this approach formally as follows:

\begin{itemize}
\item[1.] Given graph $G=(V,E)$, we wish to compute the average $x_{ave}$
at all nodes. For this, first produce a matrix $P$ using the Metropolis-Hastings
method with the uniform stationary distribution.

\item[2.] Construct the pseudo-lifting $\widehat{P}$ based on $P$ as explained in
 Section \ref{sec:four}. This pseudo-lifted random walk has a stationary distribution
 $\widehat{\pi}$ on a graph $\widehat{G}$.

\item[3.] As explained below, implement the linear iterative algorithm based on
$\widehat{P}$ on the original graph $G$.

\begin{itemize}
\item[$\circ$] Let $t$ be the index of iterations of the algorithm and initially
it be equal to $0$.

\item[$\circ$] For each node $\hat{v} \in \hat{V}$, maintain a number
$y_{\hat{v}}(t)$ at the $t^{th}$ iteration. This is maintained at the node $v \in V$
where $\hat{v}$ is a copy of $v$. The initialization of these values is
stated below.
\begin{itemize}

\item[$\bullet$] Recall that, $\hat{V}$ contains $V$
as its subset. Recall that they are denoted as $V(G) \subset \hat{V}$, and
each $v\in G$ has its copy $\bar{v}\in V(G)$.

\item[$\bullet$] For each $\hat{v} \in V(G)$, initialize $y_{\hat{v}}(0)= x_v$.

\item[$\bullet$] For each $\hat{v} \in V \backslash V(G)$, initialize $y_{\hat{v}}(0) = 0$.
\end{itemize}

\item[$\circ$] In the $t+1^{th}$ iteration, update
$$ y_{\hat{v}}(t+1) = \sum_{\hat{u} \in \hat{V}} \widehat{P}_{\hat{v}\hat{u}} y_{\hat{u}}(t).$$
This update is performed by each node $v$ through receiving information from
its neighbors $u$ in $G$, where $\hat{v}$ is a copy of $v$ and neighbors (of $\hat{v}$)
$\hat{u}$ are copies of neighbors (of $v$) $u$.
\end{itemize}

\item[4.] At the end of the $t^{th}$ iteration, each node $v$ produces its estimate as $2y_{\hat{v}}(t)$, $\hat{v} \in V(G)$.
\end{itemize}
It can be easily verified that since above algorithm is indeed implementing the linear
iterative algorithm based on $\hat{P}$, the $\varepsilon$ computation time is
$T_{\varepsilon}(\widehat{P})$ and the total number of communications performed
is $C_\ep(\widehat{P})$.
In what follows, for the completeness we bound $T_{\varepsilon}(\widehat{P})$ and $C_\ep(\widehat{P})$.
\begin{lemma}\label{lemma:running_mixing}
$T_{\varepsilon}(\widehat{P})=O\lf(\mathcal{H}(\widehat{P})\log \frac1{\varepsilon\pi_0}\rf)$.
\end{lemma}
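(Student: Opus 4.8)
The plan is to reduce the computation time $T_\varepsilon(\widehat P)$ to an $\varepsilon'$-mixing time of $\widehat P$ in total variation, and then convert that $\varepsilon'$-mixing time into the stopping-rule mixing time $\mathcal{H}(\widehat P)$ using the relations of Lov\'asz and Winkler \cite{LW98} advertised in Section \ref{subsec:two-four}.

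First I would track exactly what the iteration computes. Since the update is $y_{\hat v}(t+1)=\sum_{\hat u}\widehat P_{\hat v\hat u}y_{\hat u}(t)$, we have $y(t)=\widehat P^{\,t}\hat\bx$, and by ergodicity $\widehat P^{\,t}\to\bone\widehat\pi^T$, so $y_{\hat v}(t)\to\sum_{\hat u}\widehat\pi_{\hat u}\hat x_{\hat u}$. With the initialization $\hat x_{\hat v}=x_v$ for $\hat v\in V(G)$ and $\hat x_{\hat v}=0$ otherwise, together with property (b) of Definition \ref{def:pseudo-lifting} (which gives $\widehat\pi_{\bar v}=\tfrac12\pi(v)=\tfrac1{2n}$ for the unique copy $\bar v\in T=V(G)$ of $v$), this limit equals $\tfrac12 x_{ave}$, so the reported value $2y_{\bar v}(t)$ converges to $x_{ave}$, confirming correctness. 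For a fixed $\bar v\in V(G)$ the error obeys
\[
\bigl|2y_{\bar v}(t)-x_{ave}\bigr| = 2\Bigl|\sum_{\hat u}\bigl(\widehat P^{\,t}(\bar v,\hat u)-\widehat\pi_{\hat u}\bigr)\hat x_{\hat u}\Bigr|\le 4\,\|\hat\bx\|_\infty\,\bigl\|\widehat P^{\,t}(\bar v,\cdot)-\widehat\pi\bigr\|_{TV}.
\]
Using $\|\hat\bx\|_\infty=\|\bx\|_\infty\le\sum_i x_i=n\,x_{ave}$ for $\bx\in\R_+^n$, this yields $\|2y(t)-x_{ave}\bone\|_\infty/x_{ave}\le 4n\,\max_{\bar v}\|\widehat P^{\,t}(\bar v,\cdot)-\widehat\pi\|_{TV}$. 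Hence, once $t\ge\tau(\varepsilon/(4n))$ the left side is at most $\varepsilon$, giving $T_\varepsilon(\widehat P)\le\tau\!\bigl(\varepsilon/(4n)\bigr)$.

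Next I would invoke the relation between the $\varepsilon'$-mixing time and $\mathcal{H}$ from \cite{LW98}: for a general, possibly non-reversible chain, $\tau(\delta)=O\!\bigl(\mathcal{H}(\widehat P)\log\tfrac1{\delta\pi_0}\bigr)$, where $\pi_0=\min_{\hat u}\widehat\pi_{\hat u}$. Concretely this follows by passing through the $\chi^2$-mixing time $\tau_2$ (note $\tau(\delta)\le\tau_2(2\delta)$) and the spectral bound (\ref{rel:mixing_eigenvalue}) applied to the reversible multiplicative reversibilization $\widehat P\widehat P^{*}$, combined with the lower bound $1/\lambda_{\widehat P\widehat P^{*}}=O(\mathcal{H}(\widehat P))$. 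Applying this with $\delta=\varepsilon/(4n)$ and absorbing the extra logarithm via $n\le 1/\pi_0$ (so that $\log\tfrac{4n}{\varepsilon\pi_0}=O(\log\tfrac1{\varepsilon\pi_0})$) gives $T_\varepsilon(\widehat P)=O\!\bigl(\mathcal{H}(\widehat P)\log\tfrac1{\varepsilon\pi_0}\bigr)$, as claimed.

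I expect the main obstacle to be the bookkeeping in the reduction step, not the mixing-time conversion: one must correctly account for the zero-initialization on the auxiliary nodes, the restriction of the read-out to $T=V(G)$, and the factor $2$, and then handle the worst-case amplification $\|\bx\|_\infty/x_{ave}\le n$, which is precisely what injects the extra $\log n$. The essential observation is that this $\log n$ is harmless because $1/\pi_0\ge\hat n\ge n$, so it is subsumed by the $\log\tfrac1{\pi_0}$ term already present. A secondary technical point is to ensure the non-reversible form of the \cite{LW98} relation is used (through $\widehat P\widehat P^{*}$), since $\widehat P$ is not reversible.
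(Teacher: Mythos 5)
Your proof is correct and follows essentially the same route as the paper's: reduce $T_\varepsilon(\widehat P)$ to a total-variation mixing time at level $\Theta(\varepsilon/n)=\Theta(\varepsilon\pi_0)$ (the paper does this by weighting each coordinate against $\widehat\pi_j\geq\frac12\pi_0$ and invoking $\tau(\varepsilon\pi_0/4)$, you do it via $\|\bx\|_\infty\leq n\,x_{ave}$ and $\tau(\varepsilon/(4n))$ --- the same threshold), and then apply the Lov\'asz--Winkler relation $\tau(\delta)=O(\mathcal{H}\log\frac1\delta)$, which the paper simply cites from \cite{LW98}. One caution: your parenthetical derivation of that relation through $\tau_2$ and the spectral bound (\ref{rel:mixing_eigenvalue}) rests on the claim $1/\lambda_{\widehat P\widehat P^{*}}=O(\mathcal{H}(\widehat P))$, which is not a valid general fact for non-reversible chains --- and is especially suspect here, where the whole point of the construction is that $\mathcal{H}(\widehat P)$ beats the bounds one would get from the reversibilization; the correct justification is the stopping-rule/sub-multiplicativity argument in \cite{LW98}, so you should cite that relation rather than re-derive it spectrally.
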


\begin{proof}
Here, we need the $\varepsilon$-mixing time $\tau(\varepsilon)$ based on the total variance distance,
and recall its definition in Section \ref{subsec:two-four}:
$$\tau(\beps)=\min\lf\{t:\forall i\in G, \frac{1}{2} {\sum_{j\in G}
\left|P^t_{ij}-\pi_j\right|}\leq \beps \rf\}.$$
The following relation between two different mixing time $\tau(\varepsilon)$ and $\mathcal{H}$
is known (see \cite{LW98}):
$$\tau(\varepsilon) =
O\lf(\mathcal{H}\log{\frac1{\varepsilon}}\rf).$$
If $t$ is larger than $\tau(\varepsilon\pi_0/4)$ of $\widehat{P}$ , which is
$O\lf(\mathcal{H}(\widehat{P})\log \frac1{\varepsilon\pi_0/4}\rf)$,
\begin{eqnarray*}
\lf| y_i(t)-\langle y(0),\widehat{\pi}\rangle\rf| &= &  \lf|\sum_j \widehat{P}^t_{ij}y_j(0)-\sum_jy_j(0)
\widehat{\pi}_j\rf|~ \leq~ \sum_j y_j(0)\lf|\widehat{P}^t_{ij}-\widehat{\pi}_j\rf|\nonumber\\
&\stackrel{(a)}{\leq} &  \sum_j y_j(0)\frac{\varepsilon \pi_0}2
~ \stackrel{(b)}{\leq} ~ \sum_j y_j(0)\varepsilon \widehat{\pi}_j
~= ~\varepsilon \langle y(0),\widehat{\pi}\rangle,\nonumber
\end{eqnarray*}
where (a) is from
$\lf|\widehat{P}^t_{ij}-\widehat{\pi}_j\rf|\leq
\sum_j \lf|\widehat{P}^t_{ij}-\widehat{\pi}_j\rf|\leq 2\times\frac{\varepsilon\pi_0}4=\frac{\varepsilon \pi_0}2$,
and (b) is because $\widehat{\pi}_j>\frac12\pi_j\geq\frac12\pi_0$ for every old node $j\in V(G)$,
and $y_j(0)=0$ otherwise. This completes the proof.
\end{proof}
From the proof of Lemma \ref{lemma:running_mixing},
note that the relation $T_{\varepsilon}(P)=O\lf(\mathcal{H}(P)\log \frac1{\varepsilon\pi_0}\rf)$
holds for any random walk $P$. Therefore,
$T_{\varepsilon}(\widehat{P})=O\lf(D \log \frac1{\varepsilon\pi_0}\rf)$ and
$C_\ep(\widehat{P})=T_{\varepsilon}(\widehat{P})\times |\widehat{E}|=
O\lf(D^2n^{1-\frac{1}{1+\rho}}
\log \frac1{\varepsilon\pi_0}\rf)$
since $\mathcal{H}(\widehat{P})=O(D)$ and $|\widehat{E}|=O(Dn^{1-\frac{1}{1+\rho}})$
from Lemma \ref{lemma:pseudo_mixing} and \ref{lemma:pseudo_size}.
This also completes the proof of Theorem \ref{thm:performance_averaging}.

\vspace{-.1in}
\subsection{Comparison with other algorithms}\label{subsec:five-three}
\vspace{-.1in}

Even considering any possible algorithms based on passing messages,
the lower bound of the performance guarantees in the averaging problem
is $O(D)$ for the running time, and $O(Dn)$ for the total number of operations.
Therefore, our algorithm using pseudo-lifting gives the best running time, and
possibly loses $\frac{O^*(D^2n^{1-\frac1{\rho+1}})}{O(Dn)}=O^*(D/n^{\frac1{\rho+1}})$ factor
in terms of the total number of operations compared to the best algorithm.
For example, when $G$ is a $d$-dimensional grid graph, this loss is only $O^*(D/n^{\frac1{\rho+1}})=O^*(n^{1/d}/n^{\frac1{d+1}})
=O^*(n^{\frac1{d(d+1)}})$ since the doubling dimension of $G$ is $d$ and its diameter $D$ is $O(n^{1/d})$.
The standard linear iterations using the Metropolis-Hastings method
loses $\Omega(n^{1/d})$ factor in both the running time and the total number of operations (see Table I).

\begin{table}
\begin{center}
\begin{tabular}{|c|c|c|c|}
\hline
&Metropolis-Hastings&Pseudo-Lifting& Optimal\\
\hline\hline
$\mathop{\mbox{Mixing time(Running time)}}\limits_{:~d\mbox{-dim. grid graph}}$
& $\mathop{\Omega\left(\frac1{\Phi^2(P)}\right)}\limits_{:~O^*\left(n^{\frac2d}\right)}$
& $\mathop{O(D)}\limits_{:~O\left(n^{\frac1d}\right)}$
& $\mathop{D}\limits_{:~n^{\frac1d}}$\\
\hline
$ \mathop{\mbox{Size (dbl. dim. $\rho$)}}\limits_{:~d\mbox{-dim. grid graph}}$
& $\mathop{\Theta(n)}\limits_{:\Theta(n)}$
& $\mathop{O\left(n^{\frac{\rho}{\rho+1}} D\right)}\limits_{:~O\left(n^{1+\frac{1}{d(d+1)}}\right)}$
& $\mathop{n}\limits_{:~n}$\\
\hline
$\mathop{\mbox{Total }\#\mbox{ of operations}}\limits_{:~d\mbox{-dim. grid graph}}$
& $\mathop{\Omega\left(\frac{n}{\Phi^2(P)}\right)}\limits_{:~O^*\left(n^{1+\frac2d}\right)}$
& $\mathop{O\left(n^{\frac{\rho}{\rho+1}} D^2\right)}\limits_{:~O^*\left(n^{1+\frac{d+2}{d(d+1)}}\right)}$
& $\mathop{nD}\limits_{:~n^{1+\frac1d}}$\\
\hline

\end{tabular}
\end{center}
\label{tablex} \caption{Comparison of pseudo-lifting with the Metropolis-Hastings method.
Here, we assume $G$ has $\Theta(n)$ edges.}
\end{table}
We take note of the following subtle matter: the non-reversibility is captured in
the transition probabilities of the underlying Markov chain (or random walk); but
the linear iterative algorithm does not change its form other than this detail.

\vspace{-.1in}
\section{Lifting Using Expanders}\label{sec:six}
\vspace{-.1in}

We introduced the new notion of pseudo-lifting for the applications of interest,
one of which was the distributed averaging.
However, since it may not be relevant to certain other applications,
we optimize the size of lifting (not pseudo-lifting) in \cite{CLP02}.
The basic motivation of our construction is using the expander graph, instead of the complete graph
in \cite{CLP02}, to reduce the size of the lifting.

\vspace{-.1in}
\subsection{Preliminaries}
\vspace{-.1in}

In what follows, we will consider only $P$ such that $P\geq I/2$. This
is without loss of generality due to the following reason. Suppose
such is not the case,  then we can modify
it as $(I+P)/2$; the mixing time of $(I+P)/2$ is within a constant factor
of the mixing time of $P$.
\vspace{-.1in}
\subsubsection{Multi-commodity Flows}\label{subsec:two-three}
\vspace{-.1in}
In \cite{CLP02}, the authors use a multi-commodity flow to construct a
specific lifting of a given random walk $P$ to speed-up its mixing
time. Specifically, they consider a multi-commodity flow problem on
$G$ with the capacity constraint on edge $(u, v) \in E$ given by
$Q_{uv}$. A flow from a source $s$ to a destination $t$, denoted by $f$, is
defined as a non-negative function on edges of $G$ so that
$$\sum_j f(ji)=\sum_j f(ij)$$ for every node $i\neq s,t$. The value of the flow is defined by
$$val(f) = \sum_{j} f(sj) -\sum_{j} f(js) = \sum_{j} f(jt)-\sum_{j} f(tj)$$, and  the cost
of flow $f^{st}$ is defined as $$cost(f) = \sum_{(i,j)\in E}
f(ij).$$ A {\em multi-commodity flow} is a collection $f=(f^{st})$ of flows, where each $f^{st}$ is a flow from $s$ to $t$. Define
the {\em congestion} of a multi-commodity flow $f$ as $$\max_{(i,j)\in E} \frac{\sum_{s,t} f^{st}(ij)}{Q_{ij}}.$$
Consider the following optimization problem, essentially trying to minimize
the congestion and the cost simultaneously under the condition for the amount of flows:
\begin{align*}
&{\sf minimize}~~~~  K\\
&{\sf subject~ to} ~~   val(f^{st})=\pi_s\pi_t, ~~\forall s,t,\\
&~~~~~~~~~~~~~~ \sum_{s, t} f^{st}(ij) \leq K Q_{ij}, ~~\forall (i,j) \in E, \\
&~~~~~~~~~~~~~~ \sum_t  cost(f^{st}) \leq K \pi_s, ~ \sum_s cost(f^{st}) \leq K \pi_t, ~\forall s, t.
\end{align*}
Let $C$ be the optimal solution of the above problem. It is easy to
see that $C\geq 1/\Phi$. Further, if $P$ is reversible, then result
of Leighton and Rao \cite{LR88} on the approximate multi-commodity
implies that
$$C =  O\left(\frac1{\Phi} \log\frac{1}{\pi_0}\right).$$
Let the optimal multi-commodity flow of the above problem be $F_1$,
and we can think of $F_1$ as a weighted collection of directed
paths. In \cite{CLP02}, the authors modified $F_1$, and got a new
multi-commodity flow $F_2$ that has the same amount of $s-t$ flows as
$F_1$, while its congestion and path length are at most $12C$. They used
$F_2$ to construct a lifting $\hat{P}$ with mixing time
$\widehat{\cH}$ such that
$$\widehat{\cH} \leq 144C.$$
Also, they showed that the mixing time of any lifting $\hat{P}$ is greater than $C/2$,
hence their lifted Markov chain has almost optimal speed-up within a constant factor.

To obtain a lifting with the smaller size than that in \cite{CLP02}, we will to study the existence
of the specific $k$-commodity flow with short path lengths. For this, we
will use a {\em balanced multi-commodity flow},
which is a multi-commodity flow with the following
condition for the amount of flows:
$$val(f^{st})=g(s,t),\forall s,t,$$
and $g(s,t)$ satisfies the balanced condition:
$$ \sum_t g(s,t) \leq \pi_s, ~\sum_s g(s,t) \leq \pi_t, ~~\forall s,t.$$
Therefore, $F_1$ and $F_2$ are also balanced multi-commodity flows with $g(s,t)=\pi_s\pi_t$.
Given a multi-commodity flow $f$, let $C(f)$ be its congestion
and $D(f)$ be the length of the longest flow-path. Then, the {\em flow
number} $T$ is defined follows:
$$T=\min_{f} \lf(\max\lf\{C(f),D(f)\rf\}\rf),$$
where the minimum is taken over all balanced multi-commodity flows with
$g(s,t)=\pi_s\pi_t$.
Hence, $F_2$ implies $T \leq 12C$.
The following claim appears in \cite{KS02}:
\begin{claim}\label{clm:fn}(Claim 2.2 in \cite{KS02})
For any $g(s,t)$ satisfying the balanced condition (not necessarily $g(s,t)=\pi_s\pi_t$),
there exists a balanced multi-commodity flow $f$ with $g(s,t)$
such that $\max\{C(f),D(f)\}\leq 2T$.
\end{claim}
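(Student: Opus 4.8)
The plan is to use a two-phase, Valiant-style routing scheme: an arbitrary balanced demand is routed by first sending flow to an intermediate node distributed according to $\pi$, and then forwarding it to the true destination. The flow that already attains the flow number serves as the routing template for both phases, and the balanced condition on $g$ is precisely what keeps the aggregate load of each phase at most $T$.

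First I would fix the template. Let $f^\ast=(f^{\ast sk})$ be a balanced multi-commodity flow for the canonical demand $g(s,t)=\pi_s\pi_t$ that attains the flow number, so that $C(f^\ast)\le T$ and $D(f^\ast)\le T$; in particular its $(s,k)$-commodity $f^{\ast sk}$ carries exactly $\pi_s\pi_k$ units from $s$ to $k$ along paths of length at most $T$. Given the target demand $g(s,t)$ obeying the balanced condition, I would build the new flow $f$ as a superposition of concatenated two-phase routes. For the commodity $(s,t)$, split its demand over all intermediates $k$ in proportion to $\pi_k$: route $g(s,t)\pi_k$ units from $s$ to $k$ (Phase 1) and then the same $g(s,t)\pi_k$ units from $k$ to $t$ (Phase 2). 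Phase 1 is realized by the template $f^{\ast sk}$ rescaled by $g(s,t)/\pi_s$, and Phase 2 by $f^{\ast kt}$ rescaled by $g(s,t)/\pi_t$.

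Next I would verify the three properties. Conservation and value: at the intermediate $k$ the quantity $g(s,t)\pi_k$ arrives in Phase 1 and departs in Phase 2, so $k$ is balanced for this commodity, and the total delivered from $s$ to $t$ is $\sum_k g(s,t)\pi_k=g(s,t)\sum_k\pi_k=g(s,t)$, as required. Congestion: on any edge $(i,j)$ the total Phase-1 load equals $\sum_{s,k}\big(\sum_t g(s,t)/\pi_s\big)\,f^{\ast sk}(i,j)$, and since the balanced condition gives $\sum_t g(s,t)/\pi_s\le 1$ for every $s$, this is at most $\sum_{s,k}f^{\ast sk}(i,j)\le C(f^\ast)Q_{ij}\le T\,Q_{ij}$; the symmetric estimate, using $\sum_s g(s,t)/\pi_t\le 1$, bounds the Phase-2 load by $T\,Q_{ij}$, so the combined congestion is at most $2T$. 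Dilation: each route is a concatenation of one Phase-1 template path and one Phase-2 template path, each of length at most $D(f^\ast)\le T$, hence of total length at most $2T$. Together these give $\max\{C(f),D(f)\}\le 2T$.

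The main obstacle is the bookkeeping that makes the congestion bound tight: one must aggregate the rescaled template flows over all commodities correctly and invoke the balanced condition in exactly the right place, so that each phase contributes at most the full template congestion $T$ rather than something larger. The decision to spread each commodity over the intermediates proportionally to $\pi_k$ is the essential point — it is what turns the per-commodity rescaling factors into the aggregate quantities $\sum_t g(s,t)/\pi_s$ and $\sum_s g(s,t)/\pi_t$ that the balanced condition controls. Once this decomposition is fixed, flow conservation at the splice point and the additive dilation bound are routine.
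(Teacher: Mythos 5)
Your argument is correct. The paper itself gives no proof of this claim --- it is imported verbatim as Claim 2.2 of the cited Kolman--Scheideler reference \cite{KS02} --- and your two-phase Valiant-style routing (splitting each demand $g(s,t)$ over intermediates $k$ in proportion to $\pi_k$, reusing the optimal flow for the canonical product demand $\pi_s\pi_t$ as the template for both phases, and invoking the balanced condition $\sum_t g(s,t)\le\pi_s$, $\sum_s g(s,t)\le\pi_t$ to cap each phase's congestion at $T$) is essentially the standard proof given there. The bookkeeping you flag as the main obstacle is handled correctly: the per-edge Phase-1 load aggregates to $\sum_{s,k}\bigl(\sum_t g(s,t)/\pi_s\bigr)f^{*sk}(i,j)\le T\,Q_{ij}$, symmetrically for Phase 2, and the dilation bound $2T$ follows from concatenating two template paths; the only (harmless) implicit assumptions are that the minimum defining $T$ is attained and that $\pi_s=0$ forces $g(s,\cdot)\equiv 0$ so the rescaling is well defined.
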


\vspace{-.1in}
\subsubsection{Expanders}
\vspace{-.1in}
The expander graphs are sparse graphs which have high
connectivity properties, quantified using the edge expansion $h(G)$ as
defined as
$$h(G)=\min_{1\leq |S| \leq \frac{n}2} \frac{|\partial(S)|}{|S|},$$
where $\partial(S)$ is the set of edges with exactly one endpoint in
$S$. For constants $d$ and $c$, a family
$\mathcal{G}=\{G_1,G_2,\dots\}$ of $d$-regular graphs is called a
$(d,c)$-expander family if $h(G)>c$ for every $G\in \mathcal{G}$.
There are many explicit constructions of a $(d,c)$-expander family
available in recent times. We will use a $(d,c)$-expander graph
$G^{Ex}=(V,E^{Ex})$ (i.e. $V^{Ex}=V$), and a transition matrix
$P^{Ex}$ defined on this graph. For a given $\pi$, we can define a
reversible $P^{Ex}$ so that its stationary distribution is $\pi$ as
follows,
$$P^{Ex}_{ij} =
\begin{cases}
    \frac{\pi_0}{d\pi_i} & \text{if }(i,j)\in E^{Ex}\\
    1-\frac{\pi_0}{\pi_i} & \text{if }i= j\\
  \end{cases}.$$
In the case of $\pi_{max}=O(\pi_0)$, it is easy to check that
$\Phi(P^{Ex})=\Theta(h(G))=\Omega(1)$, where $\Phi(P^{Ex})$ is the
conductance of $P^{Ex}$. Hence,
$\lambda_{P^{Ex}} = \Omega(1)$, and the random walk defined by
$P^{Ex}$ mixes fast. In this Section, we will consider
only such $\pi$.
\vspace{-.1in}
\subsection{Construction}\label{subsec:four-one}
\vspace{-.1in}
We use the multi-commodity
flow based construction which was introduced in \cite{CLP02}.  They  essentially
use a multi-commodity flow between source-destination pairs for all
$s, t \in V$. Instead, we will use a balanced multi-commodity flow between
source-destination pairs that are obtained from an expander. Thus,
the essential change in our construction is the use of an expander in
place of a complete graph used in \cite{CLP02}. A caricature of this lifting is explained in
Figure \ref{fig5}. However, this
change makes the analysis of the mixing time lot more challenging and
requires us to use different analysis techniques. Further,
we use arguments based on the classical linear programming
to derive the bound on the size of lifting.

\begin{figure}[htb]
\begin{center}
\centerline{\psfig{figure=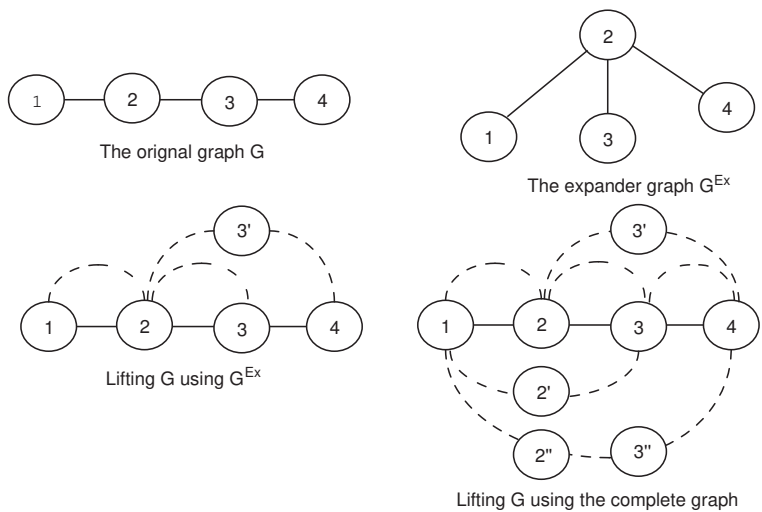,width=10cm,angle=0}}
\caption{A caricature of lifting using expander.
Let line graph $G$ be a line graph with $4$ nodes.
We wish to use an expander $G^{Ex}$ with 4 nodes, shown on the top-right side of the figure.
$G$ is lifted by adding paths that correspond to edges of expander. For example, an edge
$(2,4)$ of expander is added as path $(2, 3', 4)$.
We also draw the lifting in \cite{CLP02} which uses the complete graph.}
\label{fig5}
\end{center}
\end{figure}

To this end, we consider the following multi-commodity flow: let
$G^{Ex}=(V, E^{Ex})$ be an expander with a transition matrix
$P^{Ex}$ and a stationary distribution $\pi$ as required -- this is feasible
since we have assumed $\pi_{max} = O(\pi_0)$. We note that this assumption
is used only for the existence of expanders. Consider
a multi-commodity flow $f = (f^{st})_{(s,t) \in E^{Ex}}$ so that
\begin{itemize}
\item[(a)] $val(f^{st})=\pi_sP^{Ex}_{st}=Q^{Ex}_{st},\forall (s, t) \in E^{Ex}$;
\item[(b)]  $\sum_{s, t} f^{st}(ij) \leq K Q_{ij},\forall (i,j) \in E$;
\end{itemize}
\begin{lemma}\label{lemma:bmfp}
There is a feasible multi-commodity flow in the above flow problem with congestion($K$) and path-length at most $W$, where $W=O^*(1/\Phi(P))$.
\end{lemma}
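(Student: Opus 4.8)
The plan is to combine the flow number bound from Claim~\ref{clm:fn} with the known multi-commodity flow result for the reversible chain $P$. First I would observe that the demand function $g(s,t) = Q^{Ex}_{st} = \pi_s P^{Ex}_{st}$ (supported only on edges of the expander, and zero otherwise) satisfies the balanced condition: since $P^{Ex}$ is a stochastic matrix with stationary distribution $\pi$, we have $\sum_t g(s,t) = \sum_t \pi_s P^{Ex}_{st} = \pi_s$ and $\sum_s g(s,t) = \sum_s \pi_s P^{Ex}_{st} = (\pi^T P^{Ex})_t = \pi_t$, so both constraints $\sum_t g(s,t) \le \pi_s$ and $\sum_s g(s,t) \le \pi_t$ hold (with equality). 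Thus $g$ is an admissible balanced demand in the sense required by Claim~\ref{clm:fn}.

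Next, applying Claim~\ref{clm:fn} with this particular $g$ yields a balanced multi-commodity flow $f = (f^{st})_{(s,t)\in E^{Ex}}$ routing exactly $\mathrm{val}(f^{st}) = g(s,t) = Q^{Ex}_{st}$ for each expander edge, with both congestion $C(f)$ and longest path length $D(f)$ bounded by $2T$. This immediately gives property~(a) of the flow problem. For property~(b), the congestion bound $C(f) \le 2T$ means precisely that $\sum_{s,t} f^{st}(ij) \le 2T \cdot Q_{ij}$ for every edge $(i,j) \in E$, so taking $K = 2T$ verifies the capacity constraint, and simultaneously the path-length bound $W = D(f) \le 2T$ holds.

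It then remains to control $T$ in terms of $\Phi(P)$. Here I would invoke the chain of bounds already assembled in the preliminaries: the flow number satisfies $T \le 12C$ (via the flow $F_2$ constructed in \cite{CLP02}), and since $P$ is reversible, the Leighton--Rao approximate multi-commodity flow theorem \cite{LR88} gives $C = O\!\left(\frac{1}{\Phi(P)}\log\frac{1}{\pi_0}\right) = O^*(1/\Phi(P))$. Chaining these, $K = W = 2T = O^*(1/\Phi(P))$, which is the desired bound.

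The main obstacle I anticipate is not the feasibility argument itself, which is a fairly direct application of Claim~\ref{clm:fn}, but rather ensuring that the demand $g$ supported on the expander edges is genuinely \emph{balanced} and that the resulting flow respects the \emph{original} capacities $Q_{ij}$ (not the expander capacities $Q^{Ex}_{ij}$). The subtlety is that Claim~\ref{clm:fn} is stated for flows routed in $G$ against capacities $Q_{ij}$, whereas the demands $g(s,t)$ live on expander edges that need not be edges of $G$; one must be careful that the flow number $T$ used here is defined with respect to $G$ and $Q$, so that the congestion bound $2T$ is measured against $Q_{ij}$ as required in property~(b). Verifying this alignment—that routing the expander-edge demands through $G$ is exactly the setting Claim~\ref{clm:fn} addresses once $g$ is recognized as balanced—is the crux of the argument.
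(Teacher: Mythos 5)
Your proposal is correct and follows essentially the same route as the paper: the paper's proof also applies Claim~\ref{clm:fn} to the balanced demand $g(s,t)=Q^{Ex}_{st}$ together with the bound $T\le 12C=O^*(1/\Phi(P))$ to get $W=24C$. You simply spell out the verification that $g$ is balanced and the alignment of the congestion with the original capacities $Q_{ij}$, which the paper leaves implicit.
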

\begin{proof}
The conclusion
is derived directly from Claim \ref{clm:fn} since the
flow number $T$ is less than $12C=O^*(1/\Phi(P))$ and the flow considered
is a balanced multi-commodity flow i.e. $W=24C=O^*(1/\Phi(P))$.
\end{proof}
Now, we can
think of this multi-commodity flow as a weighted collection of directed paths
$\{(\P_r,w_r):1\leq r\leq N\}$, where the total weight of paths from
node $s$ to $t$ is $\pi_s P^{Ex}_{st}$, where $(s,t) \in E^{Ex}$.
Let $\ell_r$ be the length of path $\P_r$. From Lemma \ref{lemma:bmfp},
we have the following:
\begin{equation}
\sum_r w_r=1,\ \ \ \ \ell_r\leq W,
\end{equation}
\begin{equation}
 \sum_{r:\P_r\text{ starts at
}i  }w_r=\pi_i,\ \sum_{r:\P_r\text{ ends at }i}w_r=\pi_i, ~\mbox{for $i \in V$}
\end{equation}
\begin{equation}
\sum_{r:(i,j) \in E(\P_r)}w_r\leq W Q_{ij}, ~\mbox{for $(i,j) \in E$.}
\end{equation}

Using such a collection of weighted paths, we construct the
desired lifting next. As Figure \ref{fig5}, we construct the lifted graph
$\widehat{G}=(\widehat{V},\widehat{E})$ from $G$ by adding a
directed path $\P^{\prime}_r$ of length $\ell_r$ connecting $i$ to $j$
if $\P_r$ goes from $i$ to $j$. Subsequently, $\ell_r-1$ new nodes
are added to the original graph. The ergodic flow on an edge
$(i, j)$ of the lifted chain is defined by
\begin{align*}
\widehat{Q}_{ij}=\begin{cases}
w_r/2W\ \ &\text{if}\ (i, j) \in E(\P^{\prime}_r)\\
Q_{ij}-\sum_{r:ij\in E(\P_r)} w_r/2W &\text{if}\ (i, j) \in E(G)
\end{cases}
\end{align*}
It is easy to check it defines a Markov chain on $\widehat{G}$, and
a natural way of mapping the paths $\P^{\prime}_r$ onto the paths
$\P_r$ collapses the random walk on $\widehat{G}$ onto the random
walk on $G$. The stationary distribution of the lifted chain is
\begin{align*}
\widehat{\pi}_i=\begin{cases}
w_r/2W\ \ &\text{if}\ i\in V({\P^{\prime}}_r)\backslash V(G)\\
\pi_i-\sum_{r:\P_r\text{ thru }i} w_r/2W &\text{if}\ i\in
V(G)
\end{cases}
\end{align*}
Thus, the above stated construction is a valid lifting of the given Markov chain $P$
defined on $G$.
\vspace{-.1in}
\subsection{Mixing time and size: Proof of Theorem \ref{thm:performance_expander_lifting}}\label{subsec:four-two}
\vspace{-.1in}
We prove two Lemmas about the performance of lifting we constructed, and they imply
Theorem \ref{thm:performance_expander_lifting}.
At first, we state and prove the lemma which bounds the mixing time of the lifted chain we constructed.
\begin{lemma}\label{lemma:mixing_expander}
The mixing time $\widehat{\mathcal{H}}$ of the lifted Markov chain
represented by $\hat{Q}$ defined on $\hat{G}$ is $O^*(1/\Phi(P))$
\footnote{The precise bound is $O(W\log \frac1{\pi_0})$.}.
\end{lemma}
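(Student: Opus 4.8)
The plan is to reduce the analysis of $\widehat{P}$ on $\widehat{G}$ to that of the chain induced on the old nodes $V(G)$, and to show that this induced chain inherits the fast mixing of the expander $P^{Ex}$ up to a time-dilation of order $W$. First I would record the two structural features of the construction. Since each path has length $\ell_r\le W$ and $\sum_r w_r=1$, the stationary mass carried by the interior path-nodes is $\sum_r(\ell_r-1)w_r/2W\le 1/2$, so at least half of $\widehat{\pi}$ sits on $V(G)$, and the congestion bound of Lemma \ref{lemma:bmfp} gives $\widehat{\pi}_i=\Theta(\pi_i)$ (in particular $\widehat{\pi}_i\le\pi_i$) for each old node $i$. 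Moreover the residual old-edge flow $Q_{ij}-\sum_{r:ij\in E(\P_r)}w_r/2W$ is nonnegative, so from an old node the lifted chain either takes a residual original step or enters a path; entering a path that realizes the expander edge $(i,t)$ occurs with total ergodic flow $Q^{Ex}_{it}/2W$, and the path then deterministically deposits the walk at $t$. Writing $\widetilde{P}$ for the trace chain on $V(G)$, this exhibits $\widetilde{P}_{it}=\beta_i P^{Ex}_{it}+(1-\beta_i)P^{res}_{it}$ as a state-dependent mixture with expander weight $\beta_i=(\pi_i/2W)/\widehat{\pi}_i=\Theta(1/W)$, whose stationary distribution is $\widehat{\pi}$ restricted and renormalized to $V(G)$.

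Next I would bound $\widehat{\mathcal{H}}$ in terms of $\mathcal{H}(\widetilde{P})$. Each step of $\widetilde{P}$ costs at most one original lifted step plus, with probability $\beta_i=O(1/W)$, a traversal of length $\le W$, so the expected number of lifted steps per old-node step is $O(1)$; a Wald-type argument then converts any stopping rule for $\widetilde{P}$ into one for $\widehat{P}$ at an $O(1)$ factor in mean length. The interior nodes are handled exactly as in the proof of Theorem \ref{thm:mixing_pseudo}: while traversing a path on the final move, the rule stops at a uniformly chosen interior node with the probability reproducing the interior part of $\widehat{\pi}$, and stops at the terminal old node otherwise. Thus it suffices to produce a stopping rule for $\widetilde{P}$ whose stopping distribution is $(1-\varepsilon)$-close to $\widehat{\pi}|_{V(G)}$ with mean length $O(W\log\frac1{\pi_0})$, after which the fill-up lemma yields $\widehat{\mathcal{H}}=O(W\log\frac1{\pi_0})=O^*(1/\Phi(P))$.

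The heart of the argument, and the step I expect to be the main obstacle, is showing that the old-node marginal mixes in $O(W\log\frac1{\pi_0})$ steps of $\widetilde{P}$. Intuitively each step performs an expander step with probability $\Theta(1/W)$, the intervening residual steps preserve the mass near $\pi$, and the expander has spectral gap $\Omega(1)$ (since $\Phi(P^{Ex})=\Omega(1)$ and $\pi_{\max}=O(\pi_0)$), so $\Theta(\log\frac1{\pi_0})$ expander steps, i.e. $O(W\log\frac1{\pi_0})$ steps of $\widetilde{P}$, should suffice. The difficulty is that a crude conductance bound only gives $\Phi(\widetilde{P})=\Omega(1/W)$ and hence the far weaker estimate $O(W^2\log\frac1{\pi_0})$; to avoid this quadratic loss one must use the constant spectral gap of the expander directly rather than its conductance. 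I would therefore argue in $L^2$: apply the reversibilization bound (\ref{rel:mixing_eigenvalue}) to $\widetilde{P}$ and lower-bound $\lambda_{\widetilde{P}\widetilde{P}^*}=\Omega(1/W)$ by showing that on mean-zero functions the expander component contracts the $L^2$ norm by a factor $1-\Omega(1/W)$ while the residual component is non-expansive. The genuinely delicate point is that $P^{Ex}$ and $P^{res}$ are stationary for $\pi$ and for $\widehat{\pi}|_{V(G)}$ respectively, rather than for a common measure, and that the mixture weight $\beta_i$ is state-dependent; I would control this by exploiting the comparability $\widehat{\pi}|_{V(G)}\asymp\pi$ established above to pass between $L^2(\pi)$ and $L^2(\widetilde{\pi})$ at the cost of constants, so that the expander's fixed spectral gap survives into the lower bound on $\lambda_{\widetilde{P}\widetilde{P}^*}$.
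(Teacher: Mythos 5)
Your proposal follows essentially the same route as the paper's proof: you pass to the induced (trace) chain on the old nodes $V(G)$, use the comparability $\widehat{\pi}|_{V(G)}\asymp\pi$ and the componentwise domination by $\frac{1}{2W}P^{Ex}$ to lower-bound the spectral gap of the reversibilization $\widetilde{P}\widetilde{P}^*$ by $\Omega(\lambda_{P^{Ex}}/W)$ (the paper packages this Dirichlet-form comparison as Claims \ref{clm:a1} and \ref{clm:a2}), and then finish with a stopping rule that mixes on old nodes and redistributes onto path interiors, invoking the fill-up lemma. The key difficulty you flag — that a conductance bound loses a factor of $W$ and that the expander and residual components have different stationary measures — is exactly the one the paper resolves, and in the same way.
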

\begin{proof}
By the property of expanders, we have $\lambda_{P^{Ex}} = \Omega(1)$. Therefore,
it is sufficient to show that
$$ \hat{\cH} = O\lf(\frac{W}{\lambda_{P^{Ex}}}\log
\frac1{\pi_0}\rf).$$
First, note that for any node $i \in V$ (i.e. a original node $i$ in $G$),
\beq
\frac12 \pi_i \leq \widehat{\pi}_i \leq \pi_i.
\label{efd1}
\eeq
Now, under the lifted Markov chain the probability of
getting on any directed path $\P^{\prime}_r$ starting at
$i$ is
$$\widehat{P}_{ij}=\frac{\widehat{Q}_{ij}}{\widehat{\pi}_i}=\frac{w_r}{2W\widehat{\pi}_i}.$$
Hence the probability of getting on any directed path starting at
$i$ is
$$\sum_{r:\P^{\prime}_r\text{ starts at
}i}\frac{w_r}{2W\widehat{\pi}_i}=\frac1{2W\widehat{\pi}_i}
\sum_{r:\P^{\prime}_r\text{ starts at }i}
w_r=\frac{\pi_i}{2W\widehat{\pi}_i}.$$ From (\ref{efd1}),
this is bounded between $\frac1{2W},$ and
$\frac1{W}$.

To study the $\hat{\cH}$, we will focus on the induced random walk (or Markov chain)
on original nodes $V \subset \hat{V}$ by the lifted Markov chain $\hat{P}$.
Let $\widehat{P}^V$ be the transition matrix of this induced random walk.
Then,
$$\widehat{P}^V_{ij}=\widehat{P}_{ij}+\sum_{r:\P^{\prime}_r\text{
goes from }i\text{ to
}j}\frac{w_r}{2W{\widehat{\pi}_i}}.$$
Now, $\widehat{P}^V\geq \widehat{P} \geq I/4$,
because
$\widehat{P}_{ii}=\widehat{Q}_{ii}/\widehat{\pi}_i\geq
Q_{ii}/2\widehat{\pi}_i=P_{ii}\pi_i/2\widehat{\pi}_i\geq
P_{ii}/2\geq I/4.$ Here we have assumed that $P \geq I/2$
as discussed earlier. Now,
\begin{align*}
\widehat{P}^V_{ij}&\geq
\frac{1}{2W{\widehat{\pi}_i}}\sum_{r:\P^{\prime}_r\text{
goes from }i\text{ to
}j}w_r=\frac{\pi_iP^{Ex}_{ij}}{2W{\widehat{\pi}_i}}\geq
\frac1{2W}P^{Ex}_{ij}.
\end{align*}
And, its stationary distribution $\widehat{\pi}^V$ is  :
$\widehat{\pi}^V_i=\frac{\widehat{\pi}_i}{\widehat{\pi}(V)}.$
Therefore, by (\ref{efd1}) we have
$\frac12 \pi_i \leq \widehat{\pi}^V_i \leq 2 \pi_i.$
Now, we can apply Claim \ref{clm:a2} to obtain
the following:
\beq
\lambda_{\widehat{P}^V(\widehat{P}^V)^*}
=\Omega\lf(\frac1{W}\lambda_{P^{Ex}}\rf).\label{efd2}
\eeq
Now, we are ready to design the following stopping rule $\Gamma$ that will imply that
the desired bound on $\hat{\cH}$.
\begin{itemize}
\item[(i)] Walk until visiting old nodes of $V \subset \hat{V}$ for $T$ times, where
$ T := \lf\lceil2\log (2/\widehat{\pi}^V_0)/
{\lambda_{\widehat{P}^V(\widehat{P}^V)^*}} \rf\rceil. $ Let this
$T^{th}$ old node be denoted by $X$.
\item[(ii)] Stop at $X$ with probability $1/2$.
\item[(iii)] Otherwise, continue walking until getting onto any directed path
$\P^{\prime}_r$; choose an interior node $Y$ of $\P^{\prime}_r$
uniformly at random and stop at $Y$.
\end{itemize}
From the
relation (\ref{rel:mixing_eigenvalue}) in Section \ref{subsec:two-four}
with $\varepsilon=\frac12\sqrt{\widehat{\pi}^V_0}$,
it follows that after
time $T$ as defined above the Markov chain $\widehat{P}^V$, restricted to old nodes $V$,
has distribution close to $\hat{\pi}^V$ i.e.
$$ |\Pr(X=w)-\widehat{\pi}^V_w| \leq \widehat{\pi}^V_w/2, ~~\forall ~w \in V.$$
According to the above stopping rule, we stop at an old node $w$
with probability $1/2$. Therefore, for any $w \in V$, we have that
the stopping time $\Gamma$ stops at $w$ with probability at least
$\widehat{\pi}^V_w/4\geq{\pi}_w/8\geq \widehat{\pi}_w/8$. With
probability $1/2$, the rule does not stop at the node $X$. Let $w^k$
be the $k^{th}$ point in the walk starting from $X$. Because at any old
node $i$, the probability of getting on any directed path is between
$\frac1{2W}$ and $\frac1{W}$, a coupling
argument shows that for any old node $i$,
$$\Pr(w^k=i|w^0,\cdot,w^k\text{ are old
nodes})\geq
\left(1-\frac1{W}\right)^k\frac12\widehat{\pi}^V_i$$ If
$w$ is a new point on the directed path ${\P^{\prime}}_r$ which
connects the old node $i$ to $j$. Then,
\begin{align*}
\Pr(\Gamma \text{~stop at }
w)&\geq\frac12\sum_{k=0}^{\infty}\text{Prob}(w^k=i|w^0,\cdot,w^k\text{
are old points})\\
&~~~~~~~~~~~~~~~~\times\text{Prob}(\text{at }i\text{, get on the
path~}{\P^{\prime}}_r)\times\frac1{\ell_r}\\
&\geq\frac12\sum_{k=0}^{\infty}\left(1-\frac1{W}\right)^k\frac12\widehat{\pi}^V_i\frac{w_r}{2W\widehat{\pi}_i}\frac1{W} \\&\geq \frac{w_r}{16W^2}\sum_{k=0}^{\infty}\left(1-\frac1{W}\right)^k\\
&=\frac{w_r}{16W}\\
&=\frac18\widehat{\pi}_w
\end{align*}
The average length of this stopping rule is $O(T+W)$. By
(\ref{efd2}), {\small \begin{align*}
O(T+W)&=O\left(\lf\lceil\frac2{\lambda_{\widehat{P}^V(\widehat{P}^V)^*}}\log
(2/\pi_0)\rf \rceil+W \right)=
O\left(\frac{W}{\lambda_{P^{Ex}}}\log (1/\pi_0)\right).
\end{align*}}
Thus, we have established that the stopping rule $\Gamma$ has the average
length $O(W \log 1/\pi_0)$ and the distribution of the stopping node is
$\Omega(\hat{\pi})$. Therefore, using the {\em fill-up} lemma stated in \cite{A82}, it follows that
$\widehat{\cH} = O(W\log 1/\pi_0)$.
\end{proof}

Also, we bound the size of the lifted chain we constructed as follows.
\begin{lemma}\label{lemma:lift-size}
The size of the lifted Markov chain can be bounded above as $O^*(|E|/\Phi(P))$
\footnote{The precise bound is $O(|E|W)$.}.
\end{lemma}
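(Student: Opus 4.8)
The plan is to bound the size $|\widehat{E}|$ of the lifted chain by controlling the total length of the paths we glue on. By construction $\widehat{G}$ consists of the original edges together with, for each weighted path $\P_r$ of the collection $\{(\P_r,w_r)\}$, a directed path $\P^{\prime}_r$ of length $\ell_r$ (adding $\ell_r$ new edges); accounting also for the $O(n)$ self-loops gives
\[
|\widehat{E}| \;=\; O\!\left(|E| + \sum_{r=1}^N \ell_r\right),
\]
where $N$ is the number of paths carrying positive weight. Since Lemma \ref{lemma:bmfp} already guarantees $\ell_r \le W$ for every $r$, it suffices to show that the flow can be chosen so that $N = O(|E|)$, for then $\sum_r \ell_r \le N W = O(|E|\,W) = O^*(|E|/\Phi(P))$.

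The key step is to replace the flow produced by Lemma \ref{lemma:bmfp} by one supported on few paths, via the classical fact that a basic feasible solution of a linear program has small support. I would write the flow in its path form as a linear program whose variables are the weights $w_r \ge 0$ ranging over all directed paths of length at most $W$ in $G$. The constraints are one flow-value equality $\sum_{r:\,\P_r\text{ from }s\text{ to }t} w_r = Q^{Ex}_{st}$ for each expander edge $(s,t)\in E^{Ex}$, together with one congestion inequality $\sum_{r:\,(i,j)\in E(\P_r)} w_r \le W Q_{ij}$ for each $(i,j)\in E$. Decomposing the flow of Lemma \ref{lemma:bmfp} into paths (discarding cycles, which carry no value) exhibits a feasible point, so this polytope is nonempty. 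Choosing a vertex of it and introducing a slack variable for each congestion inequality, the number of nonzero variables of a basic feasible solution is at most the number of equality constraints, namely $|E^{Ex}| + |E|$; in particular at most this many of the $w_r$ are positive.

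To finish, since $G^{Ex}$ is $d$-regular with $d = O(1)$ and $G$ is connected, $|E^{Ex}| = O(n) = O(|E|)$, whence $N \le |E^{Ex}| + |E| = O(|E|)$. Combining this with the displayed size bound yields $|\widehat{E}| = O(|E| + N W) = O(|E|\,W)$, and as $W = O^*(1/\Phi(P))$ by Lemma \ref{lemma:bmfp} this is $O^*(|E|/\Phi(P))$, as claimed.

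The main obstacle I expect is the bookkeeping in the linear-programming step: one must verify that passing to a basic feasible solution preserves feasibility, in particular the length bound $\ell_r \le W$ (enforced by admitting only short paths as LP variables) and the congestion bound, while still realizing the prescribed values $Q^{Ex}_{st}$ on every expander edge. Once the support bound is matched against the correct count of equality constraints, the remainder is routine.
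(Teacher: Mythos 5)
Your proposal is correct and follows essentially the same route as the paper: both write the flow in path form as a linear program with one equality per expander edge and one congestion inequality per edge of $G$, pass to an extreme point (basic feasible solution) to bound the number of positive-weight paths by $|E^{Ex}|+|E|=O(|E|)$, and multiply by the path-length bound $W$ from Lemma \ref{lemma:bmfp} to get $|\widehat{E}|=O(|E|W)=O^*(|E|/\Phi(P))$. Your explicit remarks about admitting only paths of length at most $W$ as LP variables and about counting slack variables make the bookkeeping slightly more careful than the paper's, but the argument is the same.
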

\begin{proof}
We want to establish that the size of the lifted chain in terms of
the number of edges, i.e.  $|\hat{E}| = O^*(|E|/\Phi(P))$. Note that, the lifted
graph $\hat{G}$ is obtained by adding paths that appeared in
the solution of the multi-commodity flow problem. Therefore, to
establish the desired bound we need to establish a bound on the
number of distinct paths as well as their lengths.

To this end, let us re-formulate the multi-commodity flow based on
expander $G^{Ex}$ as follows. For each $(s,t) \in E^{Ex}$, we
add a flow between $s$ and $t$. Let this flow be routed along
possibly multiple paths. Let  $P_{stj}$ denote the $j^{th}$
path from $s$ to $t$ and $x_{stj}$ be
the amount of flow sent along this path. The length $\ell_{stj}$ of $P_{stj}$ is at most $W$
as the discussion in Lemma \ref{lemma:bmfp}. Let the overall solution,
denoted by $\{(\P_r,w_r)\}$, gives a feasible solution in
the following polytope with $x_{stj}$ as its variables:
\begin{align*}
&\sum_j x_{stj} = \pi_s P^{Ex}_{st}, ~~\forall (s, t) \in E^{Ex}\\
&\sum_{st \in E^{Ex}} \sum_{j:e\in \P_{stj}} x_{stj} \leq W Q_e, ~~ \forall e\in E\\
&x_{stj}\geq 0\ \ \ \ \forall s,t,j.
\end{align*}
Clearly, any feasible solution in this polytope, say
$\{(\P_r,w_r)\}$, will work for our lifting construction. Now, the
size of its support set is $|\{(\P_r,w_r)\}|$. If we consider the
extreme point of this polytope, the size of its support set is at
most $|E^{Ex}|+|E|=O(|E|)$ because the extreme point is an unique
solution of a sub-collection of linear constraints in this polytope.
Hence, if we choose such an extreme point $\{(\P_r,w_r)\}$ for our
lifting, the size of our lifted chain $|\widehat{E}|$ is at most
$O(W|E|)$ since each path is of length $O(W)$. Thus, we have
established that the size of the lifted Markov chain is at most
$O(W|E|)=O^*(|E|/\Phi(P))$.
\end{proof}

\vspace{-.1in}
\subsection{Useful Claims}
\vspace{-.1in}
We state and prove two useful claims which plays a key role in proving
Lemma \ref{lemma:mixing_expander}.
\begin{claim}\label{clm:a1}
Let $P_1,P_2$ be reversible Markov chains with
their stationary distributions $\pi_1,\pi_2$ respectively. If there exist
positive constants $\alpha,\beta,c,d$ such that
 $P_1\geq \alpha P_2$, $P_1\geq \beta
I$ and $c\pi_2 \leq \pi_1 \leq d\pi_2$, then
$$\lambda_{P_1}\geq \min \left(\frac{\alpha c}{d^2}\lambda_{P_2},2\beta \right).$$
\end{claim}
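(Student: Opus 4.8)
The plan is to split the two-sided spectral gap $\lambda_{P_1}=\min\{\,1-\lambda_1(P_1),\,1+\lambda_{n-1}(P_1)\,\}$ into its two endpoints and bound each separately, since the two hypotheses are tailored to opposite ends of the spectrum: the pair $P_1\geq\alpha P_2$ together with the sandwich $c\pi_2\leq\pi_1\leq d\pi_2$ will control the top eigenvalue $\lambda_1$ and produce the factor $\tfrac{\alpha c}{d^2}\lambda_{P_2}$, while $P_1\geq\beta I$ will control the bottom eigenvalue $\lambda_{n-1}$ and produce the factor $2\beta$. Taking the minimum of the two bounds then yields the stated inequality. I would use throughout that $P_1,P_2$ are reversible, so that their eigenvalues are real and the variational machinery below applies.

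For the bottom eigenvalue I would write $P_1=\beta I+R$ with $R:=P_1-\beta I\geq 0$ entrywise, which is exactly what $P_1\geq\beta I$ provides. Then $R$ is a nonnegative matrix whose row sums all equal $1-\beta$ and which is reversible with respect to $\pi_1$; by Perron--Frobenius its spectral radius equals $1-\beta$, so every (real) eigenvalue of $R$ lies in $[-(1-\beta),1-\beta]$. Since the eigenvalues of $P_1$ are $\beta+\mu$ as $\mu$ ranges over those of $R$, the smallest satisfies $\lambda_{n-1}(P_1)\geq\beta-(1-\beta)=2\beta-1$, hence $1+\lambda_{n-1}(P_1)\geq 2\beta$.

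For the top eigenvalue I would invoke the Dirichlet-form characterization valid for reversible chains, namely $1-\lambda_1(P)=\min_{f}\mathcal{E}_P(f,f)/\mathrm{Var}_\pi(f)$, where $\mathcal{E}_P(f,f)=\tfrac12\sum_{i,j}\pi_iP_{ij}(f_i-f_j)^2$ and $\mathrm{Var}_\pi(f)=\tfrac12\sum_{i,j}\pi_i\pi_j(f_i-f_j)^2$. The hypotheses combine cleanly here: from $\pi_1\geq c\pi_2$ and $P_1\geq\alpha P_2$ (both entrywise and nonnegative) I get $\pi_1(i)P_1(i,j)\geq\alpha c\,\pi_2(i)P_2(i,j)$, hence $\mathcal{E}_{P_1}(f,f)\geq\alpha c\,\mathcal{E}_{P_2}(f,f)$ for every $f$; and from $\pi_1\leq d\pi_2$ the pairwise variance formula gives $\mathrm{Var}_{\pi_1}(f)\leq d^2\,\mathrm{Var}_{\pi_2}(f)$. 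Dividing termwise and minimizing over $f$ yields $1-\lambda_1(P_1)\geq\tfrac{\alpha c}{d^2}\bigl(1-\lambda_1(P_2)\bigr)\geq\tfrac{\alpha c}{d^2}\lambda_{P_2}$, where the last step is immediate from $\lambda_{P_2}=\min\{1-\lambda_1(P_2),1+\lambda_{n-1}(P_2)\}\leq 1-\lambda_1(P_2)$.

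Combining the two estimates gives $\lambda_{P_1}=\min\{1-\lambda_1(P_1),\,1+\lambda_{n-1}(P_1)\}\geq\min\{\tfrac{\alpha c}{d^2}\lambda_{P_2},\,2\beta\}$, as claimed. The main point to get right is the variance comparison: the naive guess $\mathrm{Var}_{\pi_1}(f)\leq d\,\mathrm{Var}_{\pi_2}(f)$ is false, and it is precisely the symmetric double-sum representation $\mathrm{Var}_\pi(f)=\tfrac12\sum_{i,j}\pi_i\pi_j(f_i-f_j)^2$ that forces the quadratic $d^2$ factor appearing in the statement. A secondary subtlety worth checking is that the bottom-of-spectrum (near-periodicity) behavior is genuinely controlled by $P_1\geq\beta I$ alone and not by the comparison with $P_2$, since nothing in the Dirichlet-form argument prevents $\lambda_{n-1}(P_1)$ from being close to $-1$.
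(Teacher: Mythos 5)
Your proposal is correct and follows essentially the same route as the paper: the Dirichlet-form (Rayleigh quotient) comparison with the entrywise bounds $\pi_1 P_1 \geq \alpha c\,\pi_2 P_2$ and $\pi_1\otimes\pi_1 \leq d^2\,\pi_2\otimes\pi_2$ gives the $\frac{\alpha c}{d^2}\lambda_{P_2}$ bound on the top of the spectrum, and $P_1\geq\beta I$ pushes the smallest eigenvalue above $2\beta-1$. Your treatment is in fact a bit more careful than the paper's, which states the variational formula as if it characterized the two-sided gap $\lambda_{P_1}$ rather than only $1-\lambda_1(P_1)$, and which asserts the bottom-eigenvalue bound without the Perron--Frobenius justification you supply.
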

\begin{proof}
From the min-max characterization of the spectral gap (see, e.g., the page 176 in \cite{HJ86})
for the reversible Markov chain,
it follows that
\begin{align*}
\lambda_{P_1} & = ~\inf_{\psi:V \rightarrow \mathbb{R}} \lf(\frac{\sum_{i,j \in
V}(\psi(i)-\psi(j))^2(\pi_1)_i(P_1)_{ij}}{\sum_{i,j \in
V}(\psi(i)-\psi(j))^2(\pi_1)_i(\pi_1)_j}  \rf) \\
&\geq \lf(\frac{\alpha c}{d^2} \rf) \inf_{\psi:V \rightarrow \mathbb{R}}
\lf( \frac{\sum_{i,j \in
V}(\psi(i)-\psi(j))^2(\pi_2)_i(P_2)_{ij}}{\sum_{i,j \in
V}(\psi(i)-\psi(j))^2(\pi_2)_i(\pi_2)_j} \rf)\\
&=  \lf(\frac{\alpha c}{d^2}\rf) \lambda_{P_2}.
\end{align*}
The smallest eigenvalue of $P_1$ is greater than $2\beta-1$ because
$P_1\geq \beta I$. So, the distance between the smallest eigenvalue
and -1 is greater than $2\beta$. This completes the proof.
\end{proof}

\begin{claim}\label{clm:a2}
Let $P_1,P_2$ be Markov chains with their stationary distributions
$\pi_1,\pi_2$ respectively. Now, suppose $P_2$ is reversible. ($P_1$ is not necessarily reversible.) If there exist positive constants $\alpha,\beta,c,d$
such that $P_1\geq \alpha P_2$, $P_1\geq \beta I$ and
$c\pi_2 \leq \pi_1 \leq d\pi_2$, then
$$\lambda_{P_1P^*_1}\geq \min \left(\frac{\alpha \beta
c}{d^2}\lambda_{P_2},2\beta^2 \right).$$
\end{claim}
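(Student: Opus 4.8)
The plan is to mirror the proof of Claim~\ref{clm:a1}, applied to the \emph{reversible} chain $P_1P_1^*$. Recall from Section~\ref{subsec:two-four} that $P_1P_1^*$ is reversible; moreover its stationary distribution is $\pi_1$ (since $\pi_1^TP_1=\pi_1^T$ and $\pi_1^TP_1^*=\pi_1^T$), so the hypothesis $c\pi_2\le\pi_1\le d\pi_2$ is exactly the stationary-ratio condition needed when comparing $P_1P_1^*$ with $P_2$. Thus it suffices to run the min-max (Dirichlet form) argument of Claim~\ref{clm:a1} for the pair $(P_1P_1^*,P_2)$, together with a bound on the smallest eigenvalue of $P_1P_1^*$.

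The crux is the \emph{ergodic-flow} domination
\[
(\pi_1)_i\,(P_1P_1^*)_{ij}\ \ge\ \alpha\beta c\,(\pi_2)_i\,(P_2)_{ij},\qquad\text{for all }i,j.
\]
I would establish this in four steps. First, since $P_1\ge\beta I$ entrywise and all matrices are nonnegative, keeping only the $k=i$ term in $(P_1P_1^*)_{ij}=\sum_k (P_1)_{ik}(P_1^*)_{kj}$ gives $(P_1P_1^*)_{ij}\ge\beta(P_1^*)_{ij}$. Second, the reversal identity for $P_1$ yields $(\pi_1)_i(P_1^*)_{ij}=(\pi_1)_j(P_1)_{ji}$, which is where the possible non-reversibility of $P_1$ is absorbed. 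Third, $P_1\ge\alpha P_2$ entrywise replaces $(P_1)_{ji}$ by $\alpha(P_2)_{ji}$. Finally, using $(\pi_1)_j\ge c(\pi_2)_j$ and then the reversibility of $P_2$ in the form $(\pi_2)_j(P_2)_{ji}=(\pi_2)_i(P_2)_{ij}$ produces the stated bound. I emphasise that the estimate is carried out at the level of ergodic flows rather than transition probabilities; performing it entrywise would cost an extra factor $1/d$ and hence fail to reach the constant $c/d^2$.

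With this in hand, I would invoke the min-max characterisation of the spectral gap of the reversible chain $P_1P_1^*$ (exactly as displayed in the proof of Claim~\ref{clm:a1}). Bounding the numerator from below with the ergodic-flow inequality above and the denominator from above via $(\pi_1)_i(\pi_1)_j\le d^2(\pi_2)_i(\pi_2)_j$, the ratio collapses to $\frac{\alpha\beta c}{d^2}$ times the Rayleigh quotient for $P_2$, giving $\lambda_{P_1P_1^*}\ge\frac{\alpha\beta c}{d^2}\lambda_{P_2}$. For the remaining branch, note $(P_1P_1^*)_{ii}\ge(P_1)_{ii}(P_1^*)_{ii}=(P_1)_{ii}^2\ge\beta^2$, so $P_1P_1^*\ge\beta^2 I$; as in Claim~\ref{clm:a1} this forces the smallest eigenvalue of $P_1P_1^*$ to be at least $2\beta^2-1$, i.e. at distance at least $2\beta^2$ from $-1$, contributing the term $2\beta^2$. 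Taking the minimum of the two branches yields the claim.

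The main obstacle is the ergodic-flow domination step, and within it the correct bookkeeping of the stationary weights: because $P_1$ need not be reversible one is forced to route through $P_1^*$ via the flow identity $(\pi_1)_i(P_1^*)_{ij}=(\pi_1)_j(P_1)_{ji}$, and the reversibility of $P_2$ must then be used to convert $(P_2)_{ji}$ back into $(P_2)_{ij}$. Keeping the computation at the ergodic-flow level (so that only a single factor of $c$ is spent in the numerator and $d^2$ in the denominator) is exactly what produces the stated constant $\alpha\beta c/d^2$ rather than the weaker $\alpha\beta c^2/d^3$ one would obtain by first deriving an entrywise bound $P_1P_1^*\ge\frac{\alpha\beta c}{d}P_2$ and then quoting Claim~\ref{clm:a1} as a black box.
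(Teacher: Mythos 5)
Your proof is correct, and its overall strategy---pass to the reversible chain $P_1P_1^*$, which has stationary distribution $\pi_1$, and compare its Dirichlet form with that of $P_2$---is the same as the paper's. The difference is in execution: you re-derive the Rayleigh-quotient comparison from scratch at the ergodic-flow level, whereas the paper gets the identical constant from a two-line \emph{entrywise} argument followed by a black-box application of Claim~\ref{clm:a1}. Namely, since $(P_1^*)_{ii}=(P_1)_{ii}\ge\beta$ one has $P_1^*\ge\beta I$, hence
$$P_1P_1^*\ \ge\ \alpha P_2P_1^*\ \ge\ \alpha\beta P_2,\qquad P_1P_1^*\ \ge\ \beta^2 I,$$
and Claim~\ref{clm:a1} applied to the reversible pair $(P_1P_1^*,P_2)$ with parameters $(\alpha\beta,\beta^2,c,d)$ yields exactly $\min\left(\frac{\alpha\beta c}{d^2}\lambda_{P_2},\,2\beta^2\right)$. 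Consequently your closing assertion---that any entrywise bound must cost an extra factor $1/d$ and can only give $\alpha\beta c^2/d^3$---is mistaken: that loss arises only if one insists on dominating $P_1^*$ by $P_2$ entrywise via $(P_1^*)_{ij}=(\pi_1)_j(P_1)_{ji}/(\pi_1)_i\ge\frac{\alpha c}{d}(P_2)_{ij}$, whereas the factorization $P_1P_1^*\ge\alpha P_2P_1^*\ge\alpha\beta P_2$ uses $P_1^*$ only through its diagonal and never touches the ratio $(\pi_1)_j/(\pi_1)_i$. Your ergodic-flow computation is a perfectly valid alternative (and makes explicit where the non-reversibility of $P_1$ is absorbed), but it buys nothing here over the shorter argument.
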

\begin{proof}
$P_1P^*_1$ is a reversible Markov chain which has $\pi_1$ as its
stationary distribution. Because $P^*_1\geq \beta I$, $P_1P^*_1\geq
\alpha P_2 P^*_1 \geq \alpha \beta P_2.$ Also, $P_1P^*_1\geq
\beta^2 I$. Now, the proof follows from
Claim \ref{clm:a1}.
\end{proof}

\section{Conclusion}\label{sec:conc}

Motivated by applications arising in emerging networks such as sensor networks,
peer-to-peer networks and surveillance network of unmanned vehicles, we consider
the question of designing fast linear iterative algorithms for computing the average of
numbers in a network. We presented a novel construction of such an algorithm by
designing the fastest mixing non-reversible Markov chain on any given graph. Our
Markov chain obtained through a new notion denoted by pseudo-lifting. We apply
our constructions to graphs with geometry, or graphs with doubling dimension. By
using their topological properties explicitly, we obtain fast and slim pseudo-lifted
Markov chains. The effectiveness (and optimality) of our constructions are explained
through various examples.  As a byproduct, our result provides the fastest
mixing Markov chain for any given graph which should be of interest in its own
right. Our result should naturally find their applications in the context of distributed
optimization, estimation and control.

We note that the pseudo-lifting presented here is based on
a two-level ``hierarchical star'' topology. This construction is
less robust to node failures. For example, failure of ``root''
node can increase the mixing time drastically. To address this,
one may alternatively use a ``hierarchical expander'' based
pseudo-lifting. That is, in place of the ``star'' topology in the
pseudo-lifting, utilize the ``expader'' topology. This will naturally
make the construction more robust without loss of performance.
Of course, this will complicate the mixing time analysis
drastically. This is where our method developed in the expander-based lifting will be readily useful.

\bibliographystyle{plain}
\bibliography{biblio}

\begin{thebibliography}{10}

\bibitem{A82}
D.~J. Aldous.
\newblock Some inequalities for reversible {M}arkov chains.
\newblock In {\em J. London Math. Soc. 25}, pages 564--576, 1982.

\bibitem{A83}
P.~Assouad.
\newblock Plongements lipschitziens dans $\mathbb{R}^n$.
\newblock {\em Bull. Soc. Math. France}, 111(4):429--448, 1983.

\bibitem{TN06}
V.~D. Blondel, J.~M. Hendrickx, A.~Olshevsky, and J.~N. Tsitsiklis.
\newblock Convergence in multiagent coordination, consensus, and flocking.
\newblock In {\em Joint 44th IEEE Conference on Decision and Control and
  European Control Conference (CDC-ECC'05)}.

\bibitem{BDX04}
S.~Boyd, P.~Diaconis, and L.~Xiao.
\newblock Fastest mixing {M}arkov chain on a graph.
\newblock {\em SIAM Review}, 46(4):667--689, 2004.

\bibitem{bgps}
Stephen Boyd, Arpita Ghosh, Balaji Prabhakar, and Devavrat Shah.
\newblock Gossip algorithms: Design, analysis and applications.
\newblock In {\em Proceedings of IEEE INFOCOM 2005}, pages 1653--1664, 2005.

\bibitem{BGPS06}
Stephen Boyd, Arpita Ghosh, Balaji Prabhakar, and Devavrat Shah.
\newblock Randomized gossip algorithms.
\newblock {\em IEEE/ACM Trans. Netw.}, 14(SI):2508--2530, 2006.

\bibitem{CLP02}
F.~Chen, L.~Lov\'{a}sz, and I.~Pak.
\newblock Lifting {M}arkov chains to speed up mixing.
\newblock In {\em STOC '99: Proceedings of the thirty-sixth annual ACM
  symposium on Theory of computing}, pages 275--281, New York, NY, USA, 1999.
  ACM Press.

\bibitem{dasgupta-khandekar-stoc08}
Sanjoy Dasgupta and Yoav Freund.
\newblock Random projection trees and low dimensional manifolds.
\newblock In {\em STOC '08: Proceedings of the 40th annual ACM symposium on
  Theory of computing}, pages 537--546, New York, NY, USA, 2008. ACM.

\bibitem{DHN97}
P.~Diaconis, S.~Holmes, and R.~Neal.
\newblock Analysis of a non-reversible {M}arkov chain sampler.
\newblock {\em Ann. Appl. Probab.}, 10:726--752, 2000.

\bibitem{W07}
A.~G. Dimakis, A.D. Sarwate, and M.J. Wainwright.
\newblock Geographic gossip : Efficient aggregation for sensor networks.
\newblock In {\em 5th International ACM/IEEE Symposium on Information
  Processing in Sensor Networks (IPSN '06)}, April 2006.

\bibitem{GKL03}
Anupam Gupta, Robert Krauthgamer, and James~R. Lee.
\newblock Bounded geometries, fractals, and low-distortion embeddings.
\newblock In {\em FOCS '03: Proceedings of the 44th Annual IEEE Symposium on
  Foundations of Computer Science}, page 534, Washington, DC, USA, 2003. IEEE
  Computer Society.

\bibitem{H70}
W.K. Hastings.
\newblock Monte {C}arlo sampling methods using {M}arkov chains and their
  applications.
\newblock {\em Biometrika}, 57:97--109, 1970.

\bibitem{H01}
Juha Heinonen, editor.
\newblock {\em Lectures on Analysis on Metric Spaces}.
\newblock Springer, 2001.

\bibitem{HJ86}
R.~Horn and C.~Johnson, editors.
\newblock {\em Matrix Analysis}.
\newblock Cambridge Univ. Press., 1985.

\bibitem{JLS03}
A.~Jadbabaie, J.~Lin, and A.~Morse.
\newblock Coordination of groups of mobile autonomous agents using nearest
  neighbor rules.
\newblock {\em IEEE Trans. Autom. Control}, 48(6):988--1001, 2003.

\bibitem{KDG03}
David Kempe, Alin Dobra, and Johannes Gehrke.
\newblock Gossip-based computation of aggregate information.
\newblock In {\em FOCS '03: Proceedings of the 44th Annual IEEE Symposium on
  Foundations of Computer Science}, page 482, Washington, DC, USA, 2003. IEEE
  Computer Society.

\bibitem{KM04}
David Kempe and Frank McSherry.
\newblock A decentralized algorithm for spectral analysis.
\newblock In {\em STOC '04: Proceedings of the thirty-sixth annual ACM
  symposium on Theory of computing}, pages 561--568, New York, NY, USA, 2004.
  ACM Press.

\bibitem{kenyon01glauber}
Claire Kenyon, Elchanan Mossel, and Yuval Peres.
\newblock Glauber dynamics on trees and hyperbolic graphs.
\newblock In {\em {IEEE} Symposium on Foundations of Computer Science}, pages
  568--578, 2001.

\bibitem{KS02}
P.~Kolman and C.~Scheideler.
\newblock Improved bounds for the unsplittable flow problem.
\newblock In {\em SODA '02: Proceedings of the 13th annual ACM-SIAM symposium
  on Discrete algorithms}, pages 184--193, Philadelphia, PA, USA, 2002. Society
  for Industrial and Applied Mathematics.

\bibitem{MCSY03}
Manolis Koubarakis, Christos Tryfonopoulos, Stratos Idreos, and Yannis Drougas.
\newblock Selective information dissemination in p2p networks: problems and
  solutions.
\newblock {\em SIGMOD Rec.}, 32(3):71--76, 2003.

\bibitem{LR88}
Tom Leighton and Satish Rao.
\newblock Multicommodity max-flow min-cut theorems and their use in designing
  approximation algorithms.
\newblock In {\em FOCS '88: Proceedings of the 29th Annual Symposium on
  Foundations of Computer Science}, pages 422--431, 1988.

\bibitem{LW98}
Laszlo Lovasz and Peter Winkler.
\newblock Mixing times.
\newblock {\em Microsurveys in Discrete Probability (ed. D. Aldous and J.
  Propp), DIMACS Series in Discrete Math. and theor. Comp. Sci.}, pages
  85--133, 1998.

\bibitem{MFHH02}
Samuel Madden, Michael~J. Franklin, Joseph~M. Hellerstein, and Wei Hong.
\newblock Tag: a tiny aggregation service for ad-hoc sensor networks.
\newblock {\em SIGOPS Oper. Syst. Rev.}, 36(SI):131--146, 2002.

\bibitem{MJW06}
Dmitry~M. Malioutov, Jason~K. Johnson, and Alan~S. Willsky.
\newblock Walk-sums and belief propagation in gaussian graphical models.
\newblock {\em J. Mach. Learn. Res.}, 7:2031--2064, 2006.

\bibitem{MRRTT53}
N.~Metropolis, A.W. Rosenbluth, M.N. Rosenbluth, A.H. Teller, and E.~Teller.
\newblock Equations of state calculations by fast computing machines.
\newblock {\em Journal of Chemical Physics}, 21:1087--1092, 1953.

\bibitem{MS08}
D.~Mosk-Aoyama and D.~Shah.
\newblock Fast distributed algorithms for computing separable functions.
\newblock {\em IEEE Transaction on Information Theory}, 54(7):2997--3007, 2008.

\bibitem{NO08}
A.~Nedic and A.~Ozdaglar.
\newblock Distributed subgradient methods for multi-agent optimization.
\newblock {\em LIDS report 2755, to appear in IEEE Transactions on Automatic
  Control}, 2008.

\bibitem{DS06}
P.Diaconis and L.Saloff-Coste.
\newblock Moderate growth and random walk on finite groups.
\newblock {\em Geometric and Functional Analysis}, 4(1), 1994.

\bibitem{BDSX06}
J.~Sun S.~Boyd, P.~Diaconis and L.~Xiao.
\newblock Fastest mixing {M}arkov chain on a path.
\newblock In {\em The American Mathematical Monthly}, pages 113(1):70--74,
  2006.

\bibitem{ks-fb-ef:05j}
K.~Savla, F.~Bullo, and E.~Frazzoli.
\newblock On traveling salesperson problems for {D}ubins' vehicle: stochastic
  and dynamic environments.
\newblock pages 4530--4535, Seville, Spain, December 2005.

\bibitem{T84}
J.~Tsitsiklis.
\newblock Problems in decentralized decision making and computation.
\newblock {\em Ph.D. dissertation, Lab. Information and Decision Systems, MIT,
  Cambridge, MA}, 1984.

\bibitem{BX04}
L.~Xiao and S.~Boyd.
\newblock Fast linear iterations for distributed averaging.
\newblock In {\em Systems and Control Letters}, pages 53:65--78, 2004.

\end{thebibliography}

\end{document}